\newtheorem{theorem}{Theorem}
\newtheorem{lemma}{Lemma}
\newtheorem{assumption}{Assumption}
\newtheorem{definition}{Definition}
\begin{document}

\bstctlcite{IEEEexample:BSTcontrol}

\title{Interdependent Strategic Security Risk Management with Bounded Rationality in the Internet of Things }
\author{Juntao Chen~\IEEEmembership{Student Member,~IEEE,} and Quanyan Zhu~\IEEEmembership{Member,~IEEE}
\thanks{This work was supported in part by the National Science Foundation awards SES-1541164, ECCS-1847056, ARO grant W911NF1910041, and a grant through the Critical Infrastructure Resilience Institute (CIRI).}
\thanks{The authors are with the Department of Electrical and Computer Engineering, Tandon School of Engineering, New York University, Brooklyn, NY 11201 USA (E-mail:\{jc6412,qz494\}@nyu.edu).}}

\maketitle

\begin{abstract}
With the increasing connectivity enabled by the Internet of Things (IoT), security becomes a critical concern, and the users should invest to secure their IoT applications. Due to the massive devices in the IoT network, users cannot be aware of the security policies taken by all its connected neighbors. Instead, a user makes security decisions based on the cyber risks he perceives by observing a selected number of nodes.  To this end, we propose a model which incorporates the limited attention or bounded rationality nature of players in the IoT. Specifically, each individual builds a sparse cognitive network of nodes to respond to.  Based on this simplified cognitive network representation, each user then determines his security management policy by minimizing his own real-world security cost. The bounded rational decision-makings of players and their cognitive network formations are interdependent and thus should be addressed in a holistic manner. We establish a games-in-games framework and propose a Gestalt Nash equilibrium (GNE) solution concept to characterize the decisions of agents, and quantify their risk of bounded perception due to the limited attention. In addition, we design a proximal-based iterative algorithm to compute the GNE. With case studies of smart communities, the designed algorithm can successfully identify the critical users whose decisions need to be taken into account by the other users during the security management.
\end{abstract}

\begin{IEEEkeywords}
Risk management, bounded rationality, cognitive networks, Internet of Things, smart community
\end{IEEEkeywords}

\section{Introduction}

Recent years have witnessed a significant growth of urban population. As the growth continues, cities need to become more efficient to serve the surging population.
To achieve this objective, cities need to become smarter with the integration of information and communication techniques (ICTs) and urban infrastructures. Driven by the advances in sensing, computing, storage and cloud technologies, the Internet of Things (IoT) plays a central role in supporting the development of smart city. 
Though IoT enables a highly connected world, the security of IoT becomes a critical concern. There are 5.5 million new things connected every day in 2016,  as we head toward more than 20 billion by 2020 \cite{Gartner}. These IoT devices come from different manufacturers, and they have heterogeneous functionalities and security configurations and policies. No uniform security standards are used for IoT devices as they are developed using different system platforms for various functionalities. Moreover, due to the connections between IoT devices, the security of one device is also dependent on the security of other devices to which it connects. Therefore, the heterogeneity and the interconnectivity of massive heterogeneous IoT have created significant challenges for security management. Fig. \ref{SH} depicts a highly connected smart community enabled by IoT devices. Each household needs to take into account the cyber risks coming from their connected neighbors when securing their devices.

\begin{figure}[!t]
\centering
\includegraphics[width=0.75\columnwidth]{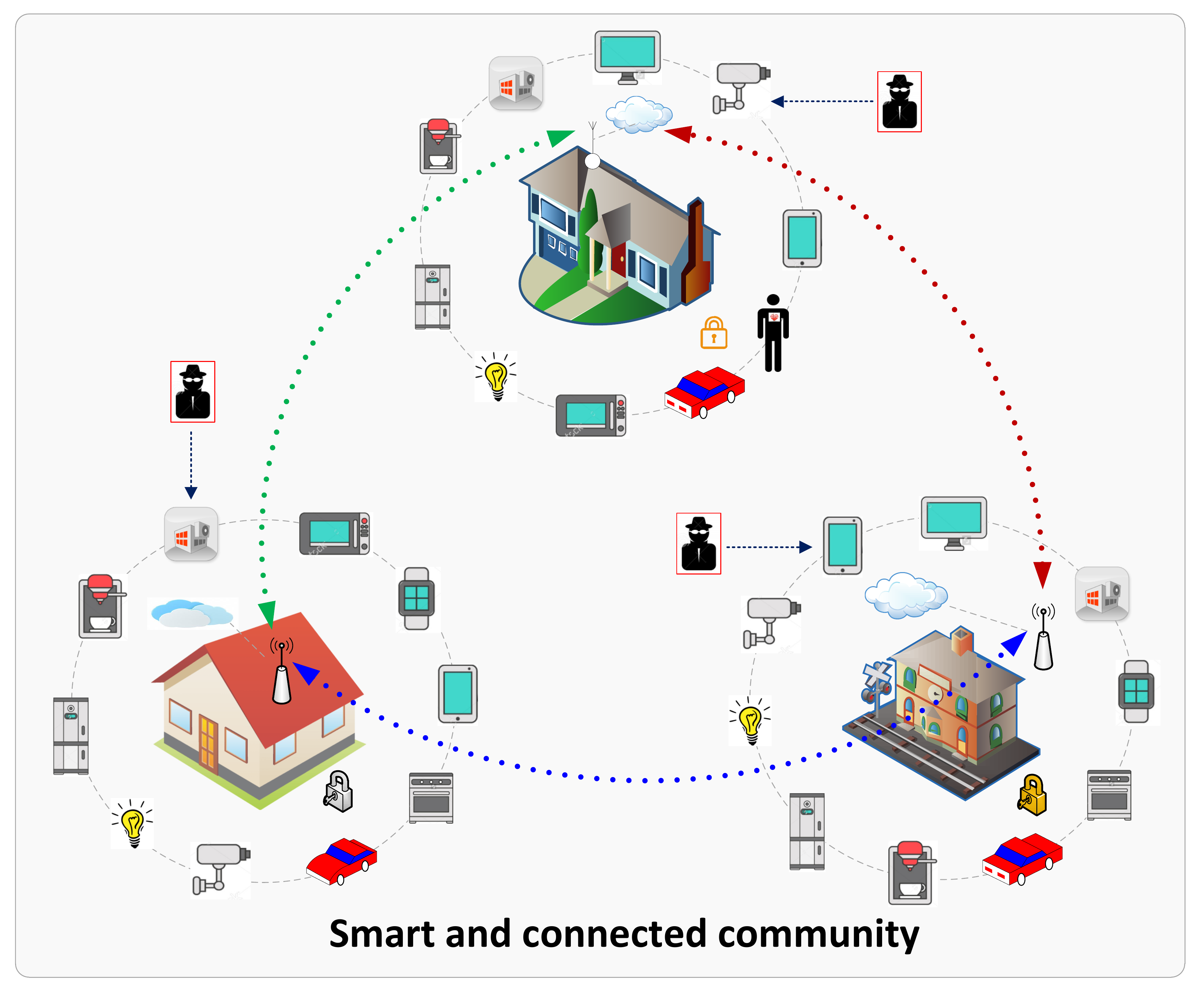}
\caption{IoT-enabled interconnected smart community. The connectivity, on one hand, enhances the situational awareness of smart homes. However, it increses the cyber risks of the community. Hence, the cyber security of each household not only dependents on its own risk management strategy but also the ones of connected neighbors.}\label{SH}
\end{figure}

In cyber networks, security management and practices of users are often viewed as the weakest link \cite{west2008psychology}. The lack of security awareness and expertise at the user's end creates human-induced vulnerabilities that can be easily exploited by an adversary, exacerbating the insecurity of IoT.   To this end, it is critical to enhance the security by strengthening security management in a decentralized way. 
Hence, in the IoT, each device owner or system manager needs to allocate resources (e.g. human resources, computing resources, investments or cognition) to secure his applications. For example, the smart building operator can spend resources on upgrading the hardware, hiring staff members for network monitoring and forensics, and developing tailored security solutions to the smart building. A smart home user, on the other hand, can safely configure its network and regularly updates its software and password of the IoT devices as illustrated in Fig. \ref{SH}.

The devices in the IoT networks and their interconnections can be modeled as nodes and links, respectively. The security policy of one device can have an impact on the security risk of nodes that are connected to it. Since various users own different devices, the security management in IoT is decentralized in nature. Therefore, the process of decentralized security decision-making can be modeled as a game problem in which each user strategically allocates his resources to secure the devices \cite{basar1999dynamic}. In this game, the users' risks are reduced when their connected neighbors are of high-level security.
Due to the complex and massive connections, users cannot be aware of the security policies taken by all its connected neighbors. Instead, a user can only make security decisions based on the cyber risk he perceives by observing a selected number of nodes. This fact indicates that the game model needs to take into account the bounded rationality of players \cite{gabaix2014sparsity}. Therefore, in the game framework, we use a cognition vector representing the observation structure of each IoT user. Specifically, a sparser cognition vector represents a user with weaker cognition ability, and he observes a smaller number of other users' behaviors when deciding his strategy. Thus, the limited attention nature of users creates a bounded perception of cyber risks.

In the established bounded rational game model, the users need to make security management decisions as well as design their cognition networks in a holistic manner. In order to achieve this goal, we define a new solution concept called \textit{Gestalt Nash equilibrium} (GNE) to capture the cognitive network formation and the security management under the bounded rationality simultaneously. The analysis of the GNE provides a quantitative method to understand the risk of massive IoTs and gives tractable security management policies. We further design a proximal-based iterative algorithm to compute the GNE of the game. The GNE resulting from the algorithm reveals several typical phenomena that match well with the real-world observations. For example, when the network contains two groups of users, then under the limited attention, all users will allocate their cognition resources to the same group which demonstrates \textit{the law of partisanship}. Further, in a heterogeneous massive IoT, the equilibrium successfully identifies the set of agents that are invariably paid attention to by other users, demonstrating the phenomenon of \textit{attraction of the mighty}.
Since the framework predicts the high-level systemic risk of the IoT network, it also can be used to inform the design of security standards and incentive mechanisms,  e.g., through contracts and cyber insurance.

The developed security management model provides an essential framework to assess IoT security risks when applied to various applications. For example, in smart home communities, the households are connected together to share heterogeneous information, e.g., electricity prices and temperature readings through smart meters, and real-time information of items in local stores and shops by wireless sensors. The connections of IoT devices thus create security interdependencies between households.
Another broader application lies in the different components in smart cities. Due to the interconnectivity between large-scale infrastructures including the transportation, power grids, and communications, the manager of each sector needs to take into account the cyber risks coming from other components when adopting security solutions.

The contributions of this paper are summarized as follows:
\begin{enumerate}
\item We propose a holistic framework to investigate the security management of users with bounded rationality in the IoT networks.
\item We model the cognition of users with a sparse vector and quantify users' risk of bounded perception resulting from the underperceived cyber threats in the network.
\item We design a proximal-based algorithm to compute the GNE which contains security management strategy and  cognitive network of agents. The algorithm discovers several phenomena including emergence of partisanship, filling the inattention, and attraction of the mighty.
\item We apply the proposed model to a smart community, and demonstrate that the designed algorithm can identify the most critical households in the network.
\end{enumerate}

\subsection{Related Work}
Security management has been investigated in various research fields including computer networks \cite{zhang2017bi}, communications \cite{zhu2012interference},  cloud computing \cite{takabi2010security} and infrastructures \cite{chen2016resilient}. With the advances in ICTs, a growing number of works have focused on the emerging critical issue of IoT security \cite{wu2018game,abie2012risk}.
Due to the interconnectivity between different agents, the security of one agent is also dependent on its connected ones which gives rise to the notion of ``interdependent security'' \cite{kunreuther2003interdependent}. The authors in \cite{xu2015cyber,pawlick2015flip,chen2017interdependent,
pawlick2019istrict} have further investigated the security interdependencies in multilayer cyber-physical systems. The authors in \cite{chen2016optimal,chen2017security,chen2018linear} have developed optimal contracts to address the cyber-physical security issues in IoT. In \cite{chen2017heterogeneous,chen2017dynamic,chen2019}, the authors proposed optimal schemes for designing secure and resilient multi-layer IoT networks through graph-theoretical approaches.

Games over networks have caught a lot of attention recently especially from the economics perspective \cite{jackson2014games, konig2014nestedness,baetz2015social}. The couplings between players in the network can be either in a strategic exclusive or strategic complement manner. Based on the features of security management in IoT, our problem falls into the latter class. For the engineering applications, the authors in  \cite{chen2016resilient,chen2016interdependent} have studied the resource allocation game over interdependent critical infrastructures where both players aim to increase the connectivity of the network. Huang \textit{et.al} \cite{huang2017large,huang2017factored,huang2018distributed,
huang2018factored} have adopted a stochastic Markov game model to design resilient operating strategies for multilayer networks.  Zhu \textit{et.al} \cite{zhu2012guidex} have proposed a game-theoretic framework for collaborative intrusion detection systems through resource management to mitigate network cyber threats. Our work differs from \cite{zhu2012guidex} in that we take into account the cognitive factors of human behaviors during decision making.

Humans with limited knowledge or cognitive resources are bounded rational, since they cannot pay attention to all the information \cite{gigerenzer2002bounded,ellis2018foundations}. Gabaix has proposed a ``sparse max'' operator to model the limited attention of players in which each agent builds a simplified model of the network based on an $l_1$ norm \cite{gabaix2014sparsity}. Built upon \cite{chen2018security} which includes some preliminary results, our work leverages on the established ``sparse max'' operator and formulates a constrained game program to capture the bounded cognition ability of players in the IoT. In addition, we further consider the risk management of each user based on their underperceived cyber risks over the network.

\subsection{Organization of the Paper}
The rest of the paper is organized as follows. Section~\ref{problem} formulates a security management game over IoT networks with bounded rational players. Section \ref{analysis_cog} analyzes the problem. Section~\ref{Al_design} designs a proximal-based iterative algorithm to compute the GNE. Case studies are given in Section~\ref{simulation}, and Section~\ref{conclusion} concludes the paper.

\subsection{Summary of Notations}
For convenience, we summarize the notations used in the paper in Table \ref{table1}. Note that notations associated with $*$ refer to the value at equilibrium. Furthermore, notations with index $k$ stands for its value at step $k$ during the iterative updates.

\begin{table}[t]
\centering
\renewcommand\arraystretch{1.4}
\caption{Nomenclature\label{table1}}
\begin{tabular}{ll} \hline
$\mathcal{N}$ & $\mathcal{N} := \{1,2,...,N\}$, set of players/users\\
$R_{ii}^i$ & security investment cost coefficient of player $i$\\ 
$R_{ij}^i$ & security investment influence coefficient of player $j$ on player $i$\\ 
$r_i$ & unit return of security investment of player $i$\\
$r$ & $r := [r_1,r_2,...,r_N]$\\
$u_{i}$ & security investment decision of player $i$\\
$u$ & $u := [u_1,u_2,...,u_N]$\\
$u_{-i}$ & set of decisions of all players except $i$-th one\\
$\mathcal{U}$ & set of decisions of all players\\
$m^i$ & $m^i:=[m^i_j]_{j\neq i,j\in\mathcal{N}}$, $m^i_j\in[0,1]$, the attention network of player $i$\\
$u_j^{c_i}$ & $u_j^{c_i} = m_j^i u_j$, decision of player $j$ perceived by player $i$ \\
$J_i$ & cost function of player $i$\\ 
$\tilde{J^i}$ & cost function of player $i$ under bounded rationality\\
$BR^{i}$ & best response of player $i$\\
$\Lambda^i$ & $\Lambda^i:=[\Lambda^i_{jk}]_{j\neq i,k\neq i,j\in\mathcal{N},k\in\mathcal{N}}$, $\Lambda^i_{jk} :=  \frac{1}{R_{ii}^i}  R_{ij}^i R_{ik}^i  {u_j}u_k$\\
$e_{N-1}$ & an $N-1$-dimensional column vector with all one entries\\
$\alpha_i$ & weighting factor quantifying the unit cost of player $i$'s cognition\\
$\beta_i$ & total number of links in the cognitive network of player $i$\\
$||\cdot||_1$ & standard L-1 norm\\
$||\cdot||$ & standard L-2 norm\\
$\iota_C$ & indicator function on set $C$\\
$\mathrm{prox}_{\cdot}$ & proximal operator\\
 \hline
\end{tabular}
\end{table}

\section{Problem Formulation}\label{problem}
In this section, we formulate a problem involving strategic security decision making and cognitive network formation of players in the IoT networks.

\subsection{Security Management Game over Networks}
In an IoT user network including a set $\mathcal{N}$ of nodes\footnote{The terms of node, agent and player refer to the user in the IoT, and they are used interchangeably.}, where $\mathcal{N}:=\{1,2,...,N\}$, each node can be seen as a player that makes strategic decisions on the security management to secure their IoT devices. For instance, in Fig. \ref{SH}, each smart home is a player securing their smart things to mitigate the cyber threats.  We define $\mathcal{U}:=\{u_1,...,u_N\}$ by the decision profile of all the players.  Specifically, $u_i$ is a one-dimensional decision variable representing player $i$'s security management effort. For convenience, we denote $u_{-i}:=\mathcal{U}\setminus \{u_i\}$. The objective of player $i$, $i\in\mathcal{N}$, is to minimize his security risk strategically by taking the costly action $u_i$. We define by $F_1^i:\mathbb{R}_+\rightarrow \mathbb{R}_+$ the cost of security management effort of player $i$ which is an increasing function of $u_i$. The corresponding benefit of security management is captured by a function $F_2^i:\mathbb{R}_+\rightarrow \mathbb{R}_+$. Intuitively, a larger $u_i$ yields a higher return, and hence $F_2^i$ is monotonically increasing. Due to the interconnections in the IoT, the risk of player $i$ is also dependent on his connected users. Then, we use a function $F_3^i:\mathbb{R}_+\times \mathbb{R}_+^{N-1}\rightarrow \mathbb{R}_+$ to represent the influence of player $i$'s connected users on his security. The coupling between players in the IoT is in a strategic complement fashion with respect to the security decisions. More specifically, a larger security investment $u_j$ of player $j$, a connected node of player $i$, decreases the cyber risks of player $i$ as well.  Therefore, the cost function of player $i$ can be expressed as the following form:
\begin{equation}\label{obj_general}
J^i(u_i,u_{-i}) = F_1^i(u_i)-F_2^i(u_i)-F_3^i(u_i,u_{-i}),
\end{equation}
where $J^i:\mathbb{R}_+\times \mathbb{R}_+^{N-1}\rightarrow \mathbb{R}$. 
To facilitate the analysis and design of security risk management strategies, we specify some appropriate forms of functions in \eqref{obj_general}. In the following, we focus on player $i$ taking the quadratic form: $F_1^i(u_i) = \frac{1}{2}R_{ii}^i u_i^2$, $F_2^i(u_i)=r_i u_i$, and $F_3^i(u_i,u_{-i}) = \sum_{j\neq i,j\in\mathcal{N}} R_{ij}^i u_i u_j$. Thus, \eqref{obj_general} can be detailed as
\begin{equation}\label{obj}
J^i(u_i,u_{-i}) = \frac{1}{2}R_{ii}^i u_i^2 - r_i u_i -  \sum_{j\neq i,j\in\mathcal{N}} R_{ij}^i u_i u_j,
\end{equation}
where $R_{ii}^i>0,\ r_i>0,\ \forall i$, and $R_{ij}^i\geq0$, $\forall j\neq i,i\in\mathcal{N}$. Note that parameters $R_{ij}^i$, $i,j\in\mathcal{N}$, represent the  risk dependence network of player $i$ in the IoT, and the value of $R_{ij}^i$ indicates the strength of risk influence of player $j$ on player $i$ which is given as a prior. The first term $\frac{1}{2}R_{ii}^i u_i^2$ in \eqref{obj} is the cost of security management with an increasing marginal price.  The second term $r_i u_i$ denotes the corresponding payoff of cyber risk reduction. Then, the first two terms capture the fact that increasing a certain level of cyber security becomes more difficult in a secure network than a less secure one. The last term $\sum_{j=1,j\neq i}^N R_{ij}^i u_i u_j$ is the aggregated security risk effect from connected users of player $i$. Specifically, the structure of $F_3^i$ in $u_i$ and $u_j$ indicates that the risk measure $J^i$ of player $i$ decreases linearly with respect to user $j$'s action. Hence, in the established model, larger investment from a user helps reduce cyber risk influence in a linear way.  We have following assumption on the security influence parameters.
\begin{assumption}\label{assumption1}
$R_{ii}^i>\sum_{j\neq i,j\in\mathcal{N}}R_{ij}^i,\ \forall i\in\mathcal{N}$.
\end{assumption}
Assumption \ref{assumption1} has a natural interpretation which indicates that the security of a user is mainly determined by his own strategy rather than other users' decisions in the IoT network. Moreover, based on the heterogeneous influence networks characterized by Assumption \ref{assumption1}, each node designs its own security investment strategy which enables the decentralized decision-making. The strategies of nodes are interdependent due to the coupling between their cost functions shown in \eqref{obj}.

Through the first order optimality condition (FOC), we obtain
\begin{equation}\label{response}
R_{ii}^i u_i - \sum_{j\neq i,j\in\mathcal{N}}R_{ij}^i u_j-r_i = 0,\ \forall i\in\mathcal{N}.
\end{equation}

Putting \eqref{response} in a matrix form yields
\begin{equation}
\begin{bmatrix}
R_{11}^1 & -R_{12}^1 & \dotsm & -R_{1N}^1\\
-R_{21}^2 & R_{22}^2 & \dotsm & -R_{2N}^2\\
\vdots & \vdots & \ddots & \vdots \\
-R_{N1}^N & -R_{N2}^N & \dotsm & R_{NN}^N
\end{bmatrix}
\begin{bmatrix}
u_1\\
u_2\\
\vdots\\
u_N
\end{bmatrix} = 
\begin{bmatrix}
r_1\\
r_2\\
\vdots\\
r_N
\end{bmatrix}\ \Leftrightarrow \ Ru=r,
\end{equation}
where $r:=[r_i]_{i\in\mathcal{N}}$, $u:=[u_i]_{i\in\mathcal{N}}$.

For convenience, we denote this security management game by $\mathcal{G}$. One solution concept of game $\mathcal{G}$ is Nash equilibrium (NE) which is defined as follows.

\begin{definition}[Nash Equilibrium of Game $\mathcal{G}$ \cite{basar1999dynamic}]
The strategy profile $u^*=[u_i^*]_{i\in\mathcal{N}}$ constitutes a Nash equilibrium of game $\mathcal{G}$ if 
$
J^i(u_i,u_{-i}^*)\geq J^i(u_i^*,u_{-i}^*),\ \forall i\in\mathcal{N},\ \forall u_i\in\mathcal{U}_i.
$
\end{definition}

The NE of game $\mathcal{G}$ yields strategic security management policies of players under the condition that they can perceive all the cyber risks in the IoT network. 

\subsection{Bounded Rational Security Management Game}\label{security_bounded_rational}
In reality, the users in IoT are connected with numerous other agents. For example, a single household can be connected with a number of other houses in terms of various types of IoT products in the smart communities. Therefore, when making security management strategies, each user may not be capable to observe all its connected neighbors. Instead, a user can only respond to a selected number of other players' decisions. Then, this bounded rational response mechanism creates a cognitive network formation process for the players in the network. Specifically, player $i$'s irrationality is captured by a vector $m^i:=[m^i_j]_{j\neq i,j\in\mathcal{N}}$, $m^i_j\in[0,1]$, which stands for the attention network that player $i$ builds. When $m^i_j=0$, user $i$ pays no attention to user $j$'s behavior; when $m^i_j=1$, user $i$ observes the true value of security management $u_j$ of user $j$. The value that $m^i_j$ admits between 0 and 1 can be interpreted as the trustfulness of user $i$ on the perceived $u_j$. Another interpretation of $m^i_j$ can be the probability that user $i$ observes the behavior of user $j$ at each time instance on the security investment over a long period.  Hence, the decision of player $j$ perceived by player $i$ becomes $u_j^{c_i} = m_j^i u_j$. Then, player $i$ minimizes the modified cost function with bounded rationality defined as:
\begin{align}
\tilde{J^i}(u_i,u_{-i}^{c_i},m^i) &= \frac{1}{2} R_{ii}^i u_i^2 - r_i u_i - \sum_{j\neq i,j\in\mathcal{N}}  m_j^i R_{ij}^i u_i  u_j\notag\\
& = \frac{1}{2} R_{ii}^i u_i^2 - r_i u_i - \sum_{j\neq i,j\in\mathcal{N}}  R_{ij}^i u_i u_j^{c_i},\label{obj_with_cog}
\end{align}
where $\tilde{J^i}:\mathbb{R}_+\times \mathbb{R}_+^{N-1}\times [0,1]^{N-1}\rightarrow \mathbb{R}$. 

The FOC of \eqref{obj_with_cog} gives
$
R_{ii}^i u_i - \sum_{j\neq i,j\in\mathcal{N}}R_{ij}^i u_j^{c_i}-r_i = 0,\ \forall i\in\mathcal{N},
$
which is equivalent to 
\begin{equation}\label{response2_matrix}
\begin{split}
\begin{bmatrix}
R_{11}^1 & -m^1_2 R_{12}^1 & \dotsm & -m^1_N R_{1N}^1\\
-m^2_1 R_{21}^2 & R_{22}^2 & \dotsm & -m^2_N R_{2N}^2\\
\vdots & \vdots & \ddots & \vdots \\
-m^N_1 R_{N1}^N & -m^N_2 R_{N2}^N & \dotsm & R_{NN}^N
\end{bmatrix}
\begin{bmatrix}
u_1\\
u_2\\
\vdots\\
u_N
\end{bmatrix} &= 
\begin{bmatrix}
r_1\\
r_2\\
\vdots\\
r_N
\end{bmatrix}\\
 \Leftrightarrow \quad R^s u&=r.
 \end{split}
\end{equation}

The bounded rational best-response of player $i$, $i\in\mathcal{N}$, then becomes
\begin{equation}\label{sparse_BR}
u_i = BR^i(u_{-i}^{c_i}) = \frac{1}{R_{ii}^i}\left(\sum_{j\neq i,j\in\mathcal{N}}R_{ij}^i u_j^{c_i}+r_i\right),
\end{equation} 
where $u_j^{c_i} = m_j^i u_j$.

We denote the security management game of players with limited attention by $\tilde{\mathcal{G}}$. Comparing with the solution concept NE of game $\mathcal{G}$, the one of game $\tilde{\mathcal{G}}$ is generalized to bounded rational Nash equilibrium (BRNE). The formal definition of BRNE is as follows.

\begin{definition}[Bounded Rational Nash Equilibrium of Game $\tilde{\mathcal{G}}$]
With given cognition vectors $m^i$, $\forall i\in\mathcal{N}$, the strategy profile $u^*=[u_i^*]_{i\in\mathcal{N}}$ constitutes a BRNE of game $\tilde{\mathcal{G}}$ if 
$
\tilde{J}^i(u_i,u_{-i}^*,m^i)\geq \tilde{J}^i(u_i^*,u_{-i}^*,m^i),\ \forall i\in\mathcal{N},\ \forall u_i\in\mathcal{U}_i.
$
\end{definition}
Note that the cognitive network each user built has an impact on the BRNE of game $\tilde{\mathcal{G}}$. Hence, how the users determine the cognition vector $m^i$, $i\in\mathcal{N}$, becomes a critical issue. In the ensuing section, we introduce the cognitive network formation of players in the IoT.

\subsection{Cognitive Network Formation}\label{cog_formation}
Due to the massive connections in IoT, each user builds a sparse cognitive network containing the agents to observe. To this end, the real cost of user $i$ by taking the bounded rationality into account becomes
\begin{align*}
J^i&(BR^i(u_{-i}^{c_i}),u_{-i}) 
= \frac{1}{2R_{ii}^i} \left(\sum_{j\neq i,j\in\mathcal{N}}R_{ij}^i u_j^{c_i}+r_i\right)^2 \\
&- \sum_{k\neq i,k\in\mathcal{N}} \left[ \frac{1}{R_{ii}^i} R_{ik}^i u_k \left(\sum_{j\neq i,j\in\mathcal{N}}R_{ij}^i u_j^{c_i}+r_i\right) \right] \\
&-  \frac{r_i}{R_{ii}^i} \left(\sum_{j\neq i,j\in\mathcal{N}}R_{ij}^i u_j^{c_i}+r_i\right)\\
 = &\frac{1}{2} \sum_{j\neq i,j\in\mathcal{N}}\sum_{k\neq i,k\in\mathcal{N}}   \frac{1}{R_{ii}^i} R_{ij}^i  R_{ik}^i {u_j^{c_i}} u_k^{c_i} - \frac{1}{2R_{ii}^i} \left(r_i\right)^2  \\
&- \sum_{k\neq i,k\in\mathcal{N}} \left(\sum_{j\neq i,j\in\mathcal{N}} {u_j^{c_i}} R_{ij}^i \right) \frac{1}{R_{ii}^i} R_{ik}^i u_k - \sum_{k\neq i,k\in\mathcal{N}}  \frac{1}{R_{ii}^i} {r_i} R_{ik}^i u_k.
\end{align*}

Incorporating the cognition vector $m^i$ into the real cost of player $i$ further yields
\begin{equation}\label{real_cost}
\begin{split}
J^i&(BR^i(u_{-i}^{c_i}),u_{-i}) = \\
&\frac{1}{2} \sum_{j\neq i,j\in\mathcal{N}}\sum_{k\neq i,k\in\mathcal{N}}  m_j^i \frac{1}{R_{ii}^i} R_{ij}^i  R_{ik}^i m_k^i {u_j} u_k - \frac{1}{2R_{ii}^i} \left(r_i\right)^2  \\
&- \sum_{k\neq i,k\in\mathcal{N}} \sum_{j\neq i,j\in\mathcal{N}} m_j^i \frac{1}{R_{ii}^i} R_{ij}^i  R_{ik}^i {u_j} u_k - \sum_{k\neq i,k\in\mathcal{N}} \frac{1}{R_{ii}^i} {r_i}  R_{ik}^i u_k.
\end{split}
\end{equation}

Recall that each user aims to minimize the security risk based on the risks he perceives. Thus, by considering the real cost induced by the bounded rationality constraint, the strategic cognitive network formation problem of player $i$ can be formulated as
\begin{align*}
m^{i*} =  & \arg \min_{m^i_j,{j\neq i,j\in\mathcal{N}}} J^i(BR^i(u_{-i}^{c_i}),u_{-i}) + \alpha_i \Vert m^i \Vert_1\\
=&\arg \min_{m^i_j,{j\neq i,j\in\mathcal{N}}}\frac{1}{2} \sum_{j\neq i,j\in\mathcal{N}}\sum_{k\neq i,k\in\mathcal{N}}   \frac{1}{R_{ii}^i} R_{ij}^i  R_{ik}^i  {u_j}u_k m_j^i m_k^i\\
&- \sum_{j\neq i,j\in\mathcal{N}} \sum_{k\neq i,k\in \mathcal{N}}    \frac{1}{R_{ii}^i} R_{ij}^i  R_{ik}^i u_k {u_j} m_j^i + \alpha_i \Vert m^i \Vert_1\\
 = &\arg \min_{m^i_j,{j\neq i,j\in\mathcal{N}}}
\frac{1}{2} {m^i}^T \Lambda^i m^i - e_{N-1}^T\Lambda^i m^i + \alpha_i \Vert m^i \Vert_1,
\end{align*}
where $\Lambda^i:=[\Lambda^i_{jk}]_{j\neq i,k\neq i,j\in\mathcal{N},k\in\mathcal{N}}$, $\Lambda^i_{jk} =  \frac{1}{R_{ii}^i}  R_{ij}^i R_{ik}^i  {u_j}u_k$, $e_{N-1}$ is an $N-1$-dimensional column vector with all one entries, and $\alpha_i$ is a weighting factor capturing the unit cost of cognition of player $i$ and it can be tuned to match with experimental data. The term $\Vert m^i \Vert_1$ is a convex relaxed version of $\Vert m^i \Vert_0$ which approximately maintains the sparse property of player $i$'s cognitive network \cite{candes2006near,baraniuk2007compressive}. The integrated term $\alpha_i \Vert m^i \Vert_1$ can be interpreted as the \textit{cognitive cost} of user $i$.

Therefore, for player $i$, we need to solve the following constrained optimization problem:
\begin{equation}\label{P1}
\begin{split}
\min_{m^i_j,{j\neq i,j\in\mathcal{N}}} &\frac{1}{2} {m^i}^T \Lambda^i m^i - e_{N-1}^T\Lambda^i m^i + \alpha_i \Vert m^i \Vert_1\\
\mathrm{s.t.}\quad & 0\leq m^i_j\leq 1,j\neq i,j\in\mathcal{N}, \ \mathrm{(Risk\  perception)},
\end{split}
\end{equation}
where the constraints $m^i_j\in[0,1]$, $\forall j\neq i$, indicate the risk perception behavior of user $i$.

The number of cognitive links that player $i$ can form is generally a positive integer, i.e., $\Vert m^i \Vert_1 = \beta_i\in\mathbb{N}^+$.  Note that $\beta_i$ here and $\alpha_i$ in \eqref{P1} have the same interpretation which both quantify the cognition ability of player $i$. Then, by choosing $\alpha_i$ strategically, the problem in \eqref{P1} is equivalent to the following problem:
\begin{equation}\label{P_equi}
\begin{split}
\min_{m^i_j,{j\neq i,j\in\mathcal{N}}} &\frac{1}{2} {m^i}^T \Lambda^i m^i - e_{N-1}^T\Lambda^i m^i \\
\mathrm{s.t.}\quad & 0\leq m^i_j\leq 1,j\neq i,j\in\mathcal{N},\ \mathrm{(Risk\  perception)},\\
& \Vert m^i \Vert_1 = \beta_i,\ \mathrm{(Limited\  attention)},
\end{split}
\end{equation}
where $\beta_i\in\mathbb{N}^+\leq N-1$ is the total number of links that player $i$ can form in his cognitive network, quantifying his limited attention. Simulation studies in Section \ref{simulation} reflect that considering $\Vert m^i \Vert_1 = \beta_i$ yields sparser cognitive networks. Note that we still solve \eqref{P1} by selecting a proper $\alpha_i$ which yields equivalent \eqref{P1} and \eqref{P_equi}.

\subsection{Gestalt Nash Equilibrium}
The formulated security management under bounded rationality problem boasts a games-of-games structure. The users make decisions strategically in the IoT network as well as form their cognitive networks selfishly. The security management game and cognitive network formation game are interdependent. Therefore, the cognitive and IoT user layers shown in Fig. \ref{two_networks} constitute a network-of-networks framework. In this paper, we aim to design an integrated algorithm to design the cognitive networks and determine the security risk management decisions of users in a holistic manner.

 \begin{figure}[!t]
\centering
\includegraphics[width=0.8\columnwidth]{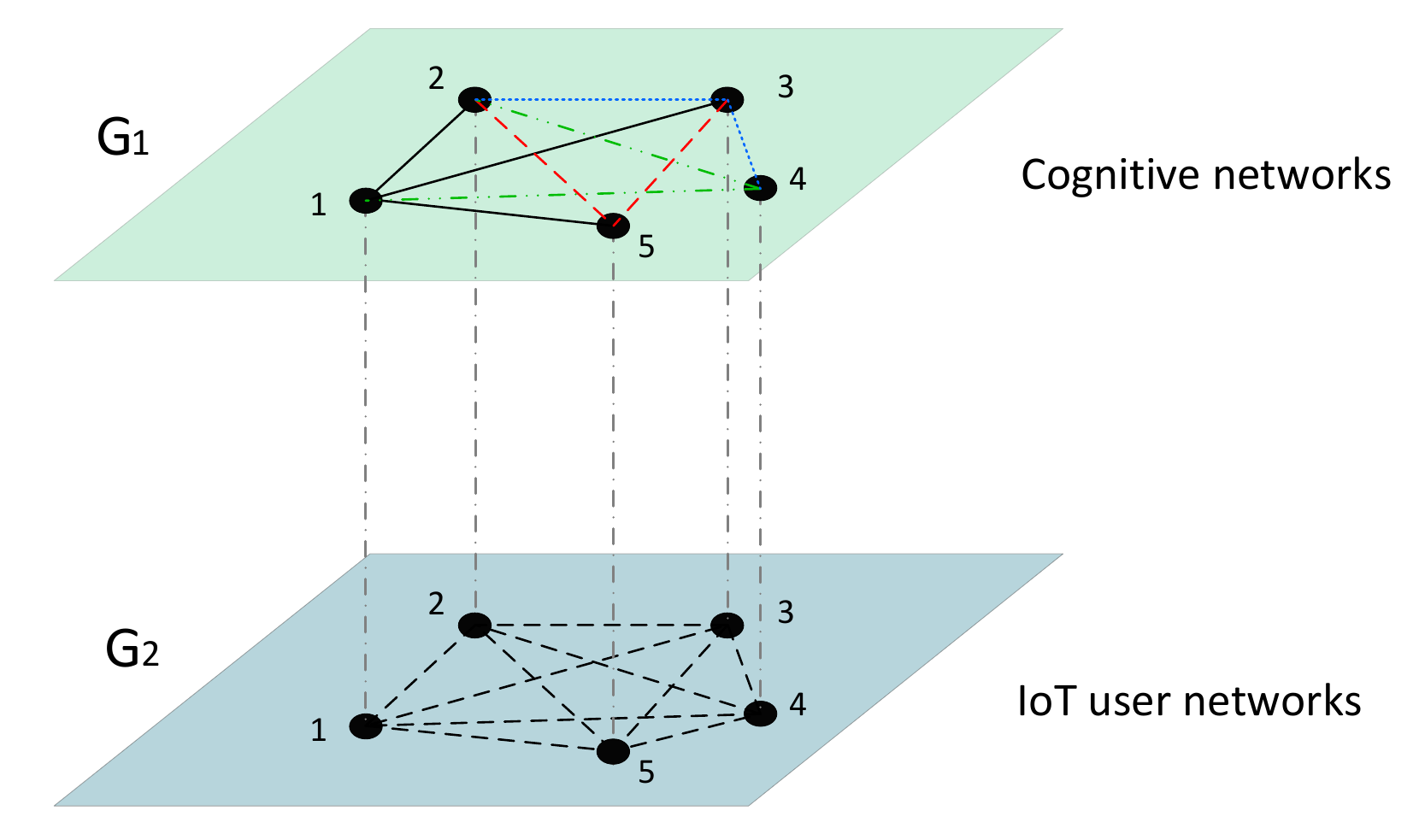}
\caption{IoT user and cognitive network-of-networks. Users make strategic security management decisions in the IoT network as well as determine their cognitive networks. The security management game in layer $G_2$ and the cognitive network formation game in layer $G_1$ are interdependent which create a games-of-games framework.}\label{two_networks}
\end{figure}

To this end, we present the solution concept, Gestalt Nash equilibrium, of the bounded rational security risk management game as follows. 

\begin{definition}[Gestalt Nash Equilibrium]
The Gestalt Nash equilibrium (GNE) of the security risk management game under bounded rationality is a profile $(m^{i*},u_i^*)$, $\forall i\in\mathcal{N}$, that satisfies
\begin{equation*}
\tilde{J}^i(u_i^*,u_{-i}^*,m^{i*})\leq \tilde{J}^i(u_i,u_{-i}^*,m^{i}),\ \forall u_i\in\mathcal{U}_i,\ \forall m^i\in [0,1]^{N-1}.
\end{equation*}
At the GNE, all the players in the network do not change their action $u_i$ and cognition vector $m^i$, $\forall i\in\mathcal{N}$, simultaneously.
\end{definition}

\textit{Remark:} The strategic security management profile $u^*=[u_i^*]_{i\in\mathcal{N}}$ at GNE is also a BRNE. 

In the following, we aim to analyze the GNE of the game and compute it by designing algorithms.

\section{Problem Analysis}\label{analysis_cog}
We first analyze the convergence of the bounded rational best-response dynamics of players in Section \ref{security_bounded_rational}. Then, we quantify the risk of bounded perception due to limited attention of players. We further reformulate the cognitive network formation problem presented in Section \ref{cog_formation}.

\subsection{Bounded Rational Best Response Dynamics}\label{BS_algorithm}
Based on Section \ref{security_bounded_rational}, the bounded rational best-response dynamics of player $i$ under cognitive network $m^{i}$, $i\in\mathcal{N}$, can be written as
\begin{equation}\label{sparse_BR_dynamic}
u_{i,k+1} = BR^i(u_{-i,k}^{c_i}) = \frac{1}{R_{ii}^i}\left(\sum_{j\neq i,j\in\mathcal{N}}R_{ij}^i u_{j,k}^{c_i}+r_i\right),
\end{equation} 
where $u_{j,k}^{c_i} = m_j^i u_{j,k}$ and $k$ denotes the iteration index. Then, 
we obtain the following convergence result of security management strategy updates of users under given cognition networks.

\begin{lemma}
Under Assumption \ref{assumption1}, the sparse best-response dynamics \eqref{sparse_BR_dynamic} for all players converge to a unique BRNE.
\end{lemma}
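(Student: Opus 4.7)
The plan is to cast the iteration \eqref{sparse_BR_dynamic} as an affine map on $\mathbb{R}^N$ and show it is a contraction under a suitable norm, so that the Banach fixed point theorem delivers both convergence and uniqueness simultaneously.

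First, I would collect the per-player updates into the single vector update $u_{k+1} = T u_k + b$, where $T \in \mathbb{R}^{N\times N}$ has entries $T_{ij} = m_j^i R_{ij}^i / R_{ii}^i$ for $j\neq i$ and $T_{ii}=0$, and $b_i = r_i/R_{ii}^i$. This is simply the matrix form of \eqref{response2_matrix} rearranged so that $u$ appears alone on the left: the fixed points of this map are exactly the solutions of $R^s u = r$, i.e., precisely the BRNE of game $\tilde{\mathcal{G}}$ under the given cognition profile $\{m^i\}_{i\in\mathcal{N}}$.

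Next I would establish the contraction property in the $\ell_\infty$ norm. The induced matrix norm gives
\begin{equation*}
\|T\|_\infty \;=\; \max_{i\in\mathcal{N}} \sum_{j\neq i} \frac{m_j^i R_{ij}^i}{R_{ii}^i} \;\leq\; \max_{i\in\mathcal{N}} \frac{1}{R_{ii}^i}\sum_{j\neq i} R_{ij}^i \;<\; 1,
\end{equation*}
where the first inequality uses $m_j^i \in [0,1]$ and the strict inequality is exactly Assumption \ref{assumption1} (strict diagonal dominance of the $R_{ij}^i$ matrix row by row). Therefore the affine map $u \mapsto Tu + b$ is a strict contraction on $(\mathbb{R}^N, \|\cdot\|_\infty)$ with modulus $\rho := \|T\|_\infty < 1$, independent of the particular choice of cognition vectors $m^i$.

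Finally, applying the Banach fixed point theorem on the complete metric space $(\mathbb{R}^N,\|\cdot\|_\infty)$ yields a unique fixed point $u^*$ with the geometric error bound $\|u_k - u^*\|_\infty \leq \rho^k \|u_0 - u^*\|_\infty$ from any initialization $u_0$. Since fixed points of the iteration coincide with the BRNE defined in Section~\ref{security_bounded_rational}, this gives existence, uniqueness, and global convergence of the sparse best-response dynamics. I expect no serious obstacle here; the only subtle point is choosing the $\ell_\infty$ norm rather than $\ell_2$, because Assumption \ref{assumption1} is naturally a row-wise diagonal-dominance condition and directly bounds $\|T\|_\infty$ but would require extra work (e.g., Gershgorin or symmetrization arguments) to bound the spectral radius in $\ell_2$.
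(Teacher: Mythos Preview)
Your proposal is correct and takes essentially the same approach as the paper: both use $m_j^i\in[0,1]$ together with Assumption~\ref{assumption1} to obtain strict row diagonal dominance of $R^s$, which is exactly the condition $\|T\|_\infty<1$ for the Jacobi iteration matrix. The only difference is presentational---the paper simply cites a standard iterative-methods reference for the convergence of Jacobi/Gauss--Seidel under diagonal dominance, whereas you unpack that classical result explicitly via the Banach fixed-point theorem in $\ell_\infty$.
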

\begin{proof}
In the sparse cognition networks, $R_{ii}^i>\sum_{j\neq i,j\in\mathcal{N}}m^i_j R_{ij}^i,\ \forall i\in\mathcal{N}$, since $m^i_j\in[0,1]$. Then, $R^s$ defined in \eqref{response2_matrix} is strictly diagonal dominant by rows, and $u$ admits a unique solution. In addition, both Gauss-Seidel and Jacobi types of best-response dynamics \eqref{sparse_BR_dynamic} converges \cite{saad2003iterative}.
\end{proof}

Note that Assumption \ref{assumption1} is a sufficient condition. In some cases, the best-response dynamics \eqref{sparse_BR_dynamic} may still converge when Assumption \ref{assumption1} does not hold. We focus on the scenarios under Assumption \ref{assumption1} in this paper which exhibit a natural security dependence interpretation.

\subsection{Risk of Bounded Perception}
When making security strategies in the IoT,  the risk of bounded perception (RBP) of users due to irrationality/limited attention is defined as follows.
\begin{definition}[RBP]
With the cognition vector $m^i$, the RBP of player $i$, $i\in\mathcal{N}$, is defined as
\begin{equation}
L_i(m^i,u_{-i})=J^i(BR^i(u_{-i}^{c_i}),u_{-i}) - J^i(BR^i(u_{-i}),u_{-i}),
\end{equation}
where $L_i:\mathcal{M}_i \times \mathcal{U}_{-i}\rightarrow \mathbb{R}$.
\end{definition}

Note that RBP is defined over the real-world cost functions \eqref{obj}, quantifying the security loss of the users due to limited attention. We further present the following lemma.
\begin{lemma}\label{lemma1}
Under the bounded rational model, each user in the network has a degraded security level comparing with the one obtained from the model containing fully rational users. The RBP of player $i$, $i\in\mathcal{N}$, with bounded rationality is 
\begin{equation*}
L_i(m^i,u_{-i}) = \frac{1}{2} \sum_{j\neq i,j\in\mathcal{N}} \sum_{k\neq i,k\in\mathcal{N}}  (1-m^i_j)(1-m^i_k) \frac{1}{R_{ii}^i} R_{ji}^i  R_{ik}^i {u_j} u_k.
\end{equation*}
\end{lemma}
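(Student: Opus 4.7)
The plan is to compute $L_i$ directly from the definition by subtracting the fully rational cost from the bounded-rational cost, and then use symmetry in the double sum to factor the result. The paper has essentially done most of the calculation already: the expression for $J^i(BR^i(u_{-i}^{c_i}),u_{-i})$ is given in equation \eqref{real_cost}. So the first step is simply to observe that $J^i(BR^i(u_{-i}),u_{-i})$ is obtained from \eqref{real_cost} by setting $m^i_j = 1$ for every $j \neq i$, since the fully rational best response corresponds to the all-ones cognition vector.

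Next I would write out the difference $L_i = J^i(BR^i(u_{-i}^{c_i}),u_{-i}) - J^i(BR^i(u_{-i}),u_{-i})$ term by term. The constant $-\frac{1}{2 R_{ii}^i}(r_i)^2$ and the linear term $-\sum_{k \neq i} \frac{r_i}{R_{ii}^i} R_{ik}^i u_k$ are independent of $m^i$ and cancel. What survives is
\begin{equation*}
L_i = \tfrac{1}{2} \sum_{j,k} \tfrac{R_{ij}^i R_{ik}^i}{R_{ii}^i} u_j u_k \left[ m^i_j m^i_k - 1 \right] - \sum_{j,k} \tfrac{R_{ij}^i R_{ik}^i}{R_{ii}^i} u_j u_k \left[ m^i_j - 1 \right],
\end{equation*}
where all sums are over $j,k \neq i$ in $\mathcal{N}$.

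The key step is then to exploit the symmetry of the double sum. Because $R_{ij}^i R_{ik}^i u_j u_k$ is symmetric in $(j,k)$, the cross-term can be rewritten as $\sum_{j,k} R_{ij}^i R_{ik}^i u_j u_k (m^i_j - 1) = \tfrac{1}{2}\sum_{j,k} R_{ij}^i R_{ik}^i u_j u_k \bigl[(m^i_j - 1) + (m^i_k - 1)\bigr]$. Substituting this back and combining inside the bracket gives $m^i_j m^i_k - 1 - (m^i_j - 1) - (m^i_k - 1) = (1 - m^i_j)(1 - m^i_k)$, yielding the claimed expression (reading the $R_{ji}^i$ in the statement as the typographical variant of $R_{ij}^i$ used elsewhere in the paper).

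Finally, to justify the statement that the bounded-rational user has degraded security, I would note that $(1 - m^i_j)(1 - m^i_k) \geq 0$ since $m^i_j, m^i_k \in [0,1]$, and all of $R_{ii}^i, R_{ij}^i, R_{ik}^i, u_j, u_k$ are nonnegative, so $L_i \geq 0$, with equality only when $m^i = e_{N-1}$. I do not expect a real obstacle here: the whole proof is algebraic bookkeeping, and the only subtlety is recognizing the symmetrization trick that collapses the linear and quadratic parts into the factored form $(1 - m^i_j)(1 - m^i_k)$.
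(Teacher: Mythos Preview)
Your proposal is correct and follows essentially the same approach as the paper's proof: both compute the difference of \eqref{real_cost} at $m^i$ and at $m^i=e_{N-1}$, then exploit the symmetry of the double sum in $(j,k)$ to collapse the remaining terms into the factored form $(1-m_j^i)(1-m_k^i)$. Your symmetrization step is marginally cleaner than the paper's three-way decomposition $1=m_j^i+(1-m_j^i)(1-m_k^i)+(1-m_j^i)m_k^i$, but the underlying identity is identical, and your observation about $R_{ji}^i$ versus $R_{ij}^i$ is accurate.
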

\begin{proof}
See Appendix \ref{lemma1_apx}.
\end{proof}

\textit{Remark:} Note that the RBP of each player is nonnegative from Lemma \ref{lemma1}, since the coefficients and security investments are nonnegative and the cognition variable admits a value between 0 and 1. Intuitively, if player $i$ is able to perceive all the cyber risks in the network, i.e., $m^i_j=1$, $\forall j\neq i,\ j\in\mathcal{N}$, then the RBP is $L_i(m^i,u_{-i})=0$. In this scenario, the bounded rational model degenerates to the fully rational one. This indicates that, with more observations, the IoT users can design security management strategies better to lower their security risks. This fact also leads to the conclusion that more information (better cognitive ability) is beneficial for the users in our security management game. The result in Lemma \ref{lemma1} is further illustrated and corroborated through case studies in Section \ref{simulation}.
 
\subsection{Problem Reformulation}
We can rewrite the constrained optimization program \eqref{P1} as
\begin{equation}\label{P2}
\min_{m^i_j,{j\neq i,j\in\mathcal{N}}} Q_i(m^i):= \frac{1}{2} {m^i}^T \Lambda^i m^i - e_{N-1}^T\Lambda^i m^i + \alpha_i \Vert m^i \Vert_1 + \iota_C(m^i),
\end{equation}
where $Q_i:[0,1]^{N-1}\rightarrow \mathbb{R}\cup \{+\infty\}$, $C:=\{m^i|0\leq m^i_j\leq 1,j\neq i,j\in\mathcal{N}\}$, and $\iota_C$ is an indicator function, i.e., 
\begin{equation}
\iota_C(x) = \begin{cases} 0, & \mathrm{if}\ x\in C,\\ +\infty, & \mathrm{otherwise}.\end{cases}
\end{equation}
For convenience, we decompose the function $Q_i$ into three parts and define
\begin{equation}\label{fs}
\begin{split}
f_1^i(m^i) &= \frac{1}{2} {m^i}^T \Lambda^i m^i - e_{N-1}^T\Lambda^i m^i,\ \mathrm{(Security\ loss)},\\
 f_2^i(m^i) &= \alpha_i \Vert m^i \Vert_1,\ \mathrm{(Cognition\ cost)},\\
 f_3^i(m^i) &= \iota_C(m^i),\ \mathrm{(Feasible\ risk\ perception)},
 \end{split}
\end{equation}
where $f_1^i:\mathbb{R}^{N-1}\rightarrow \mathbb{R},$ $ f_2^i:\mathbb{R}^{N-1}\rightarrow [0,+\infty)$ and $f_3^i:\mathbb{R}^{N-1}\rightarrow \{0,+\infty\}$. Specifically, for user $i\in\mathcal{N}$, $f_1^i$ quantifies a modified security loss; $f_2^i$ captures the cognition cost; and $f_3^i$ ensures a feasible risk perception over the IoT.

The optimization problem \eqref{P2} is quite challenging to solve. First, note that the convexity of $f_1^i$ depends on the characteristics of matrix $\Lambda^i$. Specially, when $\Lambda^i$ is positive definite, then $f_1^i$ is convex in $m^i$. When $\Lambda^i$ is not definite, then solving the quadratic program is an NP hard problem. Second, the $l_1$ norm-based function $f_2^i$ and the indicator function $f_3^i$ are nonsmooth and not differentiable, though they are convex. The traditional gradient-based optimization tools are not sufficient to deal with this type of optimization problem in \eqref{P2} \cite{nocedal2006numerical}. To this end, we aim to design a proximal algorithm to solve this problem. 

\section{Computing GNE via Algorithm Design}\label{Al_design}
In this section, our goal is to design an algorithm to solve problem \eqref{P2}. We further characterize the closed form solutions for a special case with homogeneous agents for comparison during case studies in Section \ref{simulation}.  In addition, we present an integrated algorithm that computes the GNE of the bounded rational security management game.

\subsection{Basics of Proximal Operator}
To address \eqref{P2}, we leverage the tools from proximal operator theory. We first present the definition of proximal operator as follows.

\begin{definition}[Proximal Operator \cite{bauschke2011convex}]\label{proximal_operator}
Let $g\in\Gamma_0$, where $\Gamma_0$ denotes the set of proper lower semicontinuous convex functions. The proximal mapping associated to $g$ is defined as
\begin{equation}\label{prox}
\mathrm{prox}_{\lambda g}(x) = \arg\min_{l}\ g(l)+\frac{1}{2\lambda}\Vert l-x \Vert^2. 
\end{equation}
\end{definition}
Note that the proximal mapping is unique, since the optimization problem in \eqref{prox} is convex. Specifically, for function $f_2^i$ in \eqref{fs}, we have 
\begin{align*}
\left[\mathrm{prox}_{\lambda f_2^i}(x)\right]_j = 
\begin{cases}
x_j-\lambda \alpha_i, & x_j\geq \lambda \alpha_i,\\
0, & |x_j|<\lambda \alpha_i,\\
x_j+\lambda \alpha_i, & x_j\leq -\lambda \alpha_i,
\end{cases}
\end{align*}
for $j\neq i,\ j\in\mathcal{N}$, which can be put in a compact form as \cite{parikh2014proximal}
\begin{equation}\label{prox_f2}
\mathrm{prox}_{\lambda f_2^i}(x) = (x-\lambda \alpha_i e_{N-1})_+ - (-x-\lambda \alpha_i e_{N-1})_+.
\end{equation}
In addition, 
$
\mathrm{prox}_{\lambda f_3^i}(x) = \mathrm{proj}_C(x),\ C=[0,1]^{N-1},
$
which is equivalent to
\begin{align*}
\left[\mathrm{prox}_{\lambda f_3^i}(x)\right]_j = \left[\mathrm{proj}_C(x)\right]_j=
\begin{cases}
1,& \mathrm{if}\ x_j>1,\\
x_j,& \mathrm{if}\ 0\leq x_j\leq1,\\
0,& \mathrm{if}\ x_j<0,
\end{cases}
\end{align*}
where ``proj" denotes the \textit{projection} operator.

The following lemma characterizes the aggregated proximal operator of functions $f_2^i$ and $f_3^i$ which is useful in designing the proximal algorithm.
\begin{lemma}\label{compositional_prox}
Functions $f_2^i$ and $f_3^i$ defined in \eqref{fs}, $\forall i\in\mathcal{N}$, satisfy the property:
$\mathrm{prox}_{\lambda (f_2^i + f_3^i)} = \mathrm{proj}_{C}\circ \mathrm{prox}_{\lambda f_2^i}$.
\end{lemma}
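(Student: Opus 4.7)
The plan is to exploit the fact that $f_2^i$, $f_3^i$, and the Moreau quadratic $\tfrac{1}{2\lambda}\|m-x\|^2$ are all separable across the $N-1$ coordinates, which reduces the claimed identity to an elementary one-dimensional calculation that can be checked by piecewise inspection.

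First I would write $f_2^i(m)=\alpha_i\sum_j|m_j|$ and $f_3^i(m)=\sum_j\iota_{[0,1]}(m_j)$ and observe that the minimization problem defining $\mathrm{prox}_{\lambda(f_2^i+f_3^i)}(x)$ decouples into $N-1$ scalar problems of the form
\begin{equation*}
\min_{m_j\in[0,1]}\ \alpha_i\,m_j + \frac{1}{2\lambda}(m_j-x_j)^2,
\end{equation*}
where I have used that $|m_j|=m_j$ on the feasible set $[0,1]\subset\mathbb{R}_+$. The unconstrained minimizer is $x_j-\lambda\alpha_i$, and clipping it to $[0,1]$ gives the three-piece formula: the value is $1$ when $x_j\geq 1+\lambda\alpha_i$, equals $x_j-\lambda\alpha_i$ when $\lambda\alpha_i\leq x_j<1+\lambda\alpha_i$, and equals $0$ when $x_j<\lambda\alpha_i$.

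Next I would compute $[\mathrm{proj}_C\circ \mathrm{prox}_{\lambda f_2^i}(x)]_j$ directly from the soft-thresholding formula \eqref{prox_f2} followed by the componentwise clipping characterization of $\mathrm{proj}_C$ given earlier in the excerpt. In the regime $x_j\geq \lambda\alpha_i$ the inner operator returns $x_j-\lambda\alpha_i\geq 0$, which projection then clips at $1$ only when $x_j\geq 1+\lambda\alpha_i$; in the regime $|x_j|<\lambda\alpha_i$ the inner operator returns $0$, already in $[0,1]$; and in the regime $x_j\leq -\lambda\alpha_i$ the inner operator returns $x_j+\lambda\alpha_i\leq 0$, which is lifted up to $0$. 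Collecting the three cases produces exactly the same piecewise formula as the direct computation above, so the two coordinate-wise maps agree, and assembling across coordinates yields the vector identity.

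I do not anticipate a serious obstacle; the only place requiring care is the negative regime $x_j\leq -\lambda\alpha_i$, where one must check that the direct minimization also returns $0$ (which it does, because the unconstrained quadratic minimizer $x_j-\lambda\alpha_i$ is strictly negative and the box $[0,1]$ has lower bound $0$). If a shorter argument is desired, I would instead invoke the general principle that $\mathrm{prox}_{f+\iota_C}=\mathrm{proj}_C\circ\mathrm{prox}_f$ whenever $f$ is coordinate-separable and $\mathrm{prox}_f$ preserves the sign pattern needed by $C$, but for the present concrete choices the direct piecewise verification is self-contained and immediate.
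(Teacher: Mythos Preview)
Your proof is correct and follows essentially the same route as the paper: both reduce to the one-dimensional case via coordinate separability and then verify the identity on $[0,1]$. The only minor difference is in how the scalar case is handled---the paper argues that $f_2^i(l)+\tfrac{1}{2\lambda}(l-x)^2$ is strictly convex (hence unimodal) with unconstrained minimizer $l^*=\mathrm{prox}_{\lambda f_2^i}(x)$, so the constrained minimizer on $[0,1]$ is automatically $\mathrm{proj}_{[0,1]}(l^*)$, whereas you simplify $|m_j|=m_j$ on $[0,1]$ and match the two sides by explicit piecewise inspection. Both verifications are equally valid; the paper's unimodality observation is a touch more conceptual, while yours is more self-contained and avoids any reliance on subdifferential first-order conditions at the nondifferentiable point.
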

\begin{proof}
We proof for single dimensional case, i.e., $C=[0,1]$, and the analysis can be generalized for higher dimensional cases. By definition, we obtain
\begin{align*}
\mathrm{prox}_{\lambda (f_2^i + f_3^i)}(x) &= \arg\min_{l}\ f_2^i(l)+ f_3^i(l)+\frac{1}{2\lambda}\Vert l-x \Vert^2\\
& = \arg\min_{l\in C}\ f_2^i(l)+\frac{1}{2\lambda}\Vert l-x \Vert^2.
\end{align*}
Let $l^* = \arg_l\left(\frac{\partial \left( f_2^i(l)+\frac{1}{2\lambda}\Vert l-x \Vert^2\right)}{\partial x}=0\right)= \mathrm{prox}_{\lambda f_2^i }(x)$. In addition, function $f_2^i(l)+\frac{1}{2\lambda}\Vert l-x \Vert^2$ is decreasing in $l<l^*$ and increasing in $l\geq l^*$. Remind that $C=[0,1]$ is a closed set. Hence, when $0\leq l^* \leq 1$, $\mathrm{prox}_{\lambda (f_2^i + f_3^i)}(x)=l^*$; when $l<l^*$, $\mathrm{prox}_{\lambda (f_2^i + f_3^i)}(x)=0$; and when $l>l^*$, $\mathrm{prox}_{\lambda (f_2^i + f_3^i)}(x)=1$. In all three cases, we obtain 
$
\mathrm{prox}_{\lambda (f_2^i + f_3^i)}(x) = \mathrm{proj}_{C}(l^*) = \mathrm{proj}_{C}(\mathrm{prox}_{\lambda f_2^i }(x)).
$
\end{proof}

Lemma \ref{compositional_prox} indicates that we can deal with the convex terms of cognitive cost and feasible risk perception jointly. The security loss term $f_1^i$ is addressed in the ensuing section.

\subsection{Design of Proximal Algorithm}
Recall that $f^i_2$ and $f_3^i$, $\forall i\in\mathcal{N}$, are nonsmooth and not differentiable. To characterize the optimal cognition vector in $f^i_2$ and $f_3^i$, we first present the definition of subdifferential of a function which can be nonconvex and nonsmooth as follows.
\begin{definition}[Subdifferential \cite{rockafellar2009variational}]\label{subdifferential}
Let $f:\mathbb{R}^n\rightarrow \mathbb{R}$ be a proper and lower semicontinuous function.
\begin{enumerate}
\item The domain of $f$ is denoted by $\mathrm{dom}\ f:=\{x\in\mathbb{R}^n:f(x)<+\infty\}$.
\item For $x\in\mathrm{dom}\ f$, the Fr\'e{}chet subdifferential of $f$ at $x$ is the set of vectors $p\in\mathbb{R}^n$, denoted by $\hat{\partial}f(x)$, that satisfy
$$\underset{y\neq x, y\rightarrow x}{\lim\ \inf} \frac{1}{\Vert y-x \Vert}\left[ f(y)-f(x)- \langle p,y-x\rangle\right]\geq 0.$$
\item The limiting-subdifferential (or subdifferential) of $f$ at $x\in\mathrm{dom}\ f$, denoted by $\partial f(x)$, is defined by
\begin{align*}
\partial f(x):=\Big\{p\in\mathbb{R}^n:\exists x_n\rightarrow x, f(x_n)\rightarrow f(x),\\
p_k\in\hat{\partial}f(x_n)\rightarrow p\Big\}.
\end{align*}
\end{enumerate}
\end{definition}

\textit{Remark:} Based on the subifferential, a necessary condition for $x\in\mathbb{R}^n$ being a minimizer of $f$ is 
\begin{equation}\label{necessary_minimizer}
\partial f(x)\ni 0.
\end{equation}
Note that the points satisfying \eqref{necessary_minimizer} are called critical points of $f$. Our goal is to find a critical point $\bar{m}^i\in\mathrm{dom}\ Q_i$ that can be characterized by the necessary FOC: $0\in\partial Q_i(\bar{m}^i)$.

Note that $f_1^i$ is continuously differentialble with Lipschitz continuous gradient, i.e.,
$$\Vert \nabla f_1^i(x)- \nabla f_1^i(y)\Vert\leq L_i\Vert x-y \Vert, \ \forall x,y\in \mathbb{R}^{N-1},$$
where $L_i$ is the Lipschitz constant of $f_1^i$. Specifically, $\nabla f_1^i(m^i) = \Lambda^i m^i-\Lambda^i e_{N-1}$, which further yields
\begin{equation}
\Vert \nabla f_1^i(x)- \nabla f_1^i(y)\Vert = \Vert \Lambda^i(x-y)\Vert\leq L_i\Vert x-y \Vert, \ \forall x,y\in \mathbb{R}^{N-1}.
\end{equation}

The main steps in solving \eqref{P2} for a general $\Lambda^i$ of user $i$ are designed as follows:
\begin{align}
y_k^i &= x_k^i+\frac{t_{k-1}^i}{t_k^i}(z_k^i-x_k^i) + \frac{t_{k-1}^i-1}{t_k^i}(x_k^i-x_{k-1}^i),\label{y_k1}\\
z_{k+1}^i &= \mathrm{proj}_C\left(\mathrm{prox}_{\lambda_y^i f_2^i}(y_k^i- \lambda_y^i \nabla f_1^i(y_k^i))\right),\label{z_k1}\\
v_{k+1}^i &= \mathrm{proj}_C\left(\mathrm{prox}_{\lambda_x^i f_2^i}(x_k^i- \lambda_x^i \nabla f_1^i(x_k^i))\right),\label{v_k1}\\
t_{k+1}^i & = \left(1+\sqrt{4(t_k^i)^2+1}\right)/2,\label{t_k1}\\
x_{k+1}^i & = \begin{cases}
z_{k+1}^i, & \mathrm{if}\ Q_i(z_{k+1}^i)\leq Q_i(v_{k+1}^i),\\
v_{k+1}^i, & \mathrm{Otherwise},
\end{cases}\label{x_k1}
\end{align}
where the step constants $\lambda_x^i$ and $\lambda_y^i$ satisfy $0<\lambda_x^i<1/L_i$ and $0<\lambda_y^i<1/L_i$, respectively. If the algorithm converges, the values of $x_k^i,\ y_k^i,\ z_k^i$ and $v_k^i$ are the same which give the optimal cognition vector $m^i$.

\textit{Remark:} Note that \eqref{v_k1} serves as a monitor of the update in \eqref{z_k1}. Together with the condition in \eqref{x_k1}, each player updates their cognitive network when there is a sufficient decrease of the security management cost.

Before presenting the convergence results of the algorithm \eqref{y_k1}-\eqref{x_k1}, we first characterize a critical property of function $Q_i(m^i)$ defined in \eqref{P2}.

\begin{definition}[Kurdyka-\L{}ojasiewicz (KL) Property \cite{attouch2010proximal}]\label{KL_definition}
A function $f:\mathbb{R}^n\rightarrow (-\infty,+\infty]$ has the KL property at $x^*\in\mathrm{dom}\ \partial f:=\{x\in\mathbb{R}^n:\partial f(x)\neq\emptyset\}$ if there exists $\eta\in(0,+\infty]$, a neighborhood $U$ of $x^*$, and a desingularising function $\phi\in\Phi_{\eta}$, such that $\forall x\in U\cap\{x\in\mathbb{R}^n:f(x^*)<f(x)<f(x^*)+\eta\}$, the following KL inequality holds,
\begin{equation}\label{KL_inequality}
\phi'(f(x)-f(x^*))\mathrm{dist}(0,\partial f(x))\geq 1,
\end{equation}
where $\Phi_\eta$ includes a class of function $\phi:[0,\eta)\rightarrow \mathbb{R}^+$ satisfying: (1) $\phi$ is concave and $\phi\in C^1((0,\eta))$; (2) $\phi$ is continuous at $0$ with $\phi(0)=0$; and (3) $\phi'(x)>0,\ \forall x\in(0,\eta)$. In addition, $\mathrm{dist}(0,\partial f(x)):=\inf\left\{\Vert z\Vert:z\in \partial f(x)\right\}.$
\end{definition}

Note that a proper lower semicontinuous function $f$ having the KL property at each point of $\mathrm{dom}\ \partial f$ is called a \textit{KL function}. KL inequality \eqref{KL_inequality} ensures that, by choosing a proper desingularising function $\phi$, we can reparameterize the values of function $f$ near its critical points to avoid flatness. Thus, $\phi$ has an impact on the convergence rate of the designed algorithm which will be presented in Theorem \ref{convergence}.   KL property is general in functions. Notably, the semi-algebraic functions satisfy the KL property \cite{attouch2010proximal}. Some examples include real polynomial functions, indicator functions of semi-algebraic sets and $\Vert\cdot\Vert_{p}$ with $p\geq 0$. Furthermore, the semi-algebraic property preserves under composition, finite sums and products of semi-algebraic functions  \cite{bolte2014proximal}.

\begin{lemma}\label{KL_f}
Functions $f_1^i,\ f_2^i$ and $f_3^i$ in \eqref{fs} satisfy the KL property, and thus $Q_i$ in \eqref{P2} is a KL function. In addition, the desingularising function $\phi(u)$ can be chosen as $\phi(u)=\frac{\kappa}{\theta}u^{\theta}$ for some $\theta\in(0,\frac{1}{2}]$  and $\kappa>0$.
\end{lemma}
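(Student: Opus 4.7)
The plan is to establish the KL property for each of the three constituent functions by showing they are semi-algebraic, then invoke the stability of the semi-algebraic class under finite sums to conclude for $Q_i$, and finally argue about the specific form of the desingularising function via the structure of $Q_i$ as a sum of a quadratic and a polyhedral convex part.

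First, I would verify semi-algebraicity of each piece. The function $f_1^i(m^i) = \frac{1}{2}{m^i}^T \Lambda^i m^i - e_{N-1}^T \Lambda^i m^i$ is a real polynomial in the entries of $m^i$, which is the canonical example of a semi-algebraic function cited in the remark preceding the lemma. The function $f_2^i(m^i) = \alpha_i \|m^i\|_1$ is an instance of $\|\cdot\|_p$ with $p=1\geq 0$, which the excerpt explicitly lists as semi-algebraic (its graph is the finite union of half-spaces defined by the sign patterns of the components). The function $f_3^i = \iota_C$ with $C = [0,1]^{N-1}$ is the indicator of a semi-algebraic set (a product of closed intervals cut out by finitely many polynomial inequalities), which is also listed as semi-algebraic in the excerpt.

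Next, I would combine these facts. Since semi-algebraicity is preserved under finite sums (again quoted in the excerpt), $Q_i = f_1^i + f_2^i + f_3^i$ is semi-algebraic, and since $Q_i$ is proper and lower semicontinuous (it is convex when $\Lambda^i \succeq 0$, and in any case the sum of a continuous polynomial, a continuous $\ell_1$ norm and an indicator of a closed set), the general theorem that any proper lower semicontinuous semi-algebraic function is a KL function applies to each $f_\ell^i$ and to $Q_i$. This immediately delivers the existence of some desingularising function $\phi \in \Phi_\eta$ at every critical point.

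The main obstacle is upgrading this existence statement to the explicit power form $\phi(u) = \frac{\kappa}{\theta} u^\theta$ with $\theta \in (0,\tfrac{1}{2}]$. For this I would exploit the very special structure of $Q_i$: the smooth part $f_1^i$ is a quadratic polynomial, and $f_2^i + f_3^i$ is a polyhedral convex function (its epigraph is a polyhedron, being the sum of an $\ell_1$ norm and the indicator of a box). For functions of the form ``convex quadratic plus polyhedral convex,'' it is known that the \L{}ojasiewicz exponent at any critical point is $\tfrac{1}{2}$, so the KL inequality in Definition 4 holds with $\phi(u) = \kappa u^{1/2}$ for some $\kappa>0$, which fits the required template with $\theta = \tfrac{1}{2}$. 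For indefinite $\Lambda^i$ one can fall back on the fact that, for semi-algebraic functions, the \L{}ojasiewicz exponent lies in $(0,1)$; any value $\theta \in (0,\tfrac{1}{2}]$ that is no larger than the true exponent still yields a valid desingularising function by monotonicity of $u\mapsto u^\theta$ on $[0,\eta)$ for small $\eta$. I would cite the relevant KL-exponent results (e.g., the calculus of KL exponents in the proximal-algorithm literature) to close this step, since working out the exponent from scratch would require a detailed subdifferential analysis of $\partial Q_i$ at its critical points, which is the step most prone to technical overhead.
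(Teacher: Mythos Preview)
Your argument for the KL property via semi-algebraicity of each $f_\ell^i$ and closure under finite sums matches the paper exactly, only spelled out in more detail. For the explicit power form of $\phi$, however, the paper takes a shorter route than your ``quadratic plus polyhedral'' calculus: it observes that $\mathrm{dom}\,\partial Q_i = C$ and that on $C$ the function $Q_i$ actually coincides with a polynomial (since $\|m^i\|_1=\sum_{j\neq i} m^i_j$ when $m^i\geq 0$ and $\iota_C\equiv 0$ there), hence is real-analytic; the classical \L{}ojasiewicz inequality for analytic functions then directly gives $\phi(u)=u^{1-\delta}$ with $\delta\in[\tfrac12,1)$, i.e.\ $\theta=1-\delta\in(0,\tfrac12]$, without any case split on the definiteness of $\Lambda^i$. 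Your route is also valid---the monotonicity-in-$\theta$ fallback you sketch does allow one to reduce any semi-algebraic KL exponent to one in $(0,\tfrac12]$ by shrinking $\eta$---and it has the merit of being transparent about the nonsmooth constrained structure; but the paper's key observation that $Q_i|_C$ is just a polynomial is the simpler insight you are missing, and it avoids importing the more specialized KL-exponent literature.
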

\begin{proof}
We know that $f_1^i,\ f_2^i$ and $f_3^i$ are semi-algebraic functions, and thus $Q_i$ satisfies the KL property \cite{attouch2010proximal}. Remind that when $m^i\notin C:=\{m^i|0\leq m^i_j\leq 1,j\neq i,j\in\mathcal{N}\}$, $Q_i(m^i)\rightarrow +\infty$. Based on Definition \ref{subdifferential}, we obtain  $\mathrm{dom}\ \partial Q_i = C$. Therefore, $Q_i(m^i)$ is analytic over $\mathrm{dom}\ \partial Q_i$. In addition, the desingularising function of real-analytical functions satisfying inequality \eqref{KL_inequality} can be chosen as $\phi(u)=u^{1-\delta}$, where $\delta\in[\frac{1}{2},1)$ \cite{bolte2014proximal}.
\end{proof}

Based on Lemma \ref{KL_f}, we present the convergence result of the designed algorithm \eqref{y_k1}-\eqref{x_k1} in Theorem \ref{convergence}.

\begin{theorem}\label{convergence}
The algorithm given by \eqref{y_k1}-\eqref{x_k1} converges to a critical point with rates related to the parameters $\kappa$ and $\theta$, where $\kappa$ and $\theta$ are defined in Lemma \ref{KL_f}. Specifically, there exists a $k_0$ such that $\forall k>k_0$,
$$
Q_i(x_k)-Q_i^*\leq \left(\frac{\kappa}{(k-k_0)(1-2\theta)d_2}\right)^{\frac{1}{1-2\theta}},
$$ 
where $Q_i^*$ is the function value achieved at critical points of $\{x_k\}$, $d_2=\min\left\{  \frac{1}{2d_1\kappa},\sigma (Q_i(v_0)-Q_i^*)^{2\theta-1}\right\}$,  $d_1=2\alpha(\frac{1}{\lambda_x}+L)^2/(1-2\alpha)$, and $\sigma=\frac{\kappa}{1-2\theta}\left(2^{\frac{2\theta-1}{2\theta-2}}-1\right)$.
\end{theorem}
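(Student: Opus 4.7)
The plan is to adapt the Kurdyka--\L{}ojasiewicz (KL) convergence framework of Attouch--Bolte--Svaiter to the monotone accelerated scheme \eqref{y_k1}--\eqref{x_k1}, by leveraging the monitor step $v_{k+1}^i$ to supply the descent that the extrapolated iterate $z_{k+1}^i$ alone cannot guarantee. The three ingredients I aim to establish are (i) a sufficient-decrease estimate along $\{x_k^i\}$, (ii) a linear relative-error bound on the subgradient of $Q_i$, and (iii) an application of the KL inequality furnished by Lemma \ref{KL_f}, from which the stated rate follows by a standard real-sequence argument.

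First, I would analyze the unaccelerated branch. Since $v_{k+1}^i$ is a proximal gradient step from $x_k^i$ with stepsize $0 < \lambda_x^i < 1/L_i$, and since the composite proximal operator of $f_2^i+f_3^i$ is handled cleanly by Lemma \ref{compositional_prox}, the descent lemma applied to $f_1^i$ (with $L_i$-Lipschitz gradient) together with the three-point inequality for the proximal map yields a sufficient-decrease inequality of the form
\begin{equation*}
Q_i(v_{k+1}^i) \;\leq\; Q_i(x_k^i) \;-\; \alpha \,\|v_{k+1}^i - x_k^i\|^2,
\end{equation*}
for some $\alpha>0$ depending on $\lambda_x^i$ and $L_i$. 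The selection rule \eqref{x_k1} then transfers this to $Q_i(x_{k+1}^i)\leq Q_i(v_{k+1}^i)$, so $\{Q_i(x_k^i)\}$ is monotonically non-increasing. Together with the coercivity of $Q_i$ on the compact set $C$, the iterates $\{x_k^i\}$ remain bounded and $Q_i(x_k^i) \downarrow Q_i^{*}$ for some limit $Q_i^{*}$.

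Second, I would derive the subgradient bound from the first-order optimality condition of the proximal map defining $v_{k+1}^i$, namely
\begin{equation*}
\tfrac{1}{\lambda_x^i}\bigl(x_k^i - v_{k+1}^i\bigr) + \nabla f_1^i(v_{k+1}^i) - \nabla f_1^i(x_k^i) \;\in\; \partial Q_i(v_{k+1}^i),
\end{equation*}
so Lipschitz continuity of $\nabla f_1^i$ gives $\mathrm{dist}\bigl(0,\partial Q_i(v_{k+1}^i)\bigr)\leq (1/\lambda_x^i+L_i)\|x_k^i-v_{k+1}^i\|$. Combined with the sufficient-decrease estimate, the residual subgradient is controlled by $\sqrt{Q_i(x_k^i)-Q_i(x_{k+1}^i)}$ up to multiplicative constants that ultimately produce $d_1$.

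Third, I would invoke the KL property at a cluster point $\bar{m}^i$ of $\{x_k^i\}$ (which exists by compactness of $C$). With the desingularising function $\phi(u)=\tfrac{\kappa}{\theta}u^{\theta}$ supplied by Lemma \ref{KL_f}, the KL inequality reads
\begin{equation*}
\kappa\, r_k^{\theta-1}\cdot\mathrm{dist}\bigl(0,\partial Q_i(x_k^i)\bigr) \;\geq\; 1, \quad r_k := Q_i(x_k^i)-Q_i^{*},
\end{equation*}
valid once $x_k^i$ enters a suitable neighbourhood of $\bar{m}^i$, which defines the threshold index $k_0$. Squaring and substituting the previous two estimates yields a one-step recursion $r_k^{2(1-\theta)}\leq d_1\kappa^{2}(r_{k-1}-r_k)$. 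A telescoping argument on the scalar sequence $r_k^{2\theta-1}$ (in the style of Attouch--Bolte, exploiting $2\theta-1<0$ to invert the exponent) produces the claimed decay rate with the explicit constants $d_2$ and $\sigma$ as stated.

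The principal obstacle is the acceleration step \eqref{y_k1}--\eqref{z_k1}, which on its own need not be monotone and is thereby incompatible with a direct KL-rate argument. The resolution is architectural rather than analytical: the safeguard in \eqref{x_k1} ensures that $x_{k+1}^i$ is at least as good as the plain proximal gradient iterate $v_{k+1}^i$, so all of the descent, subgradient, and KL bookkeeping can be carried out for $v_{k+1}^i$ and then pulled back to $x_{k+1}^i$ free of charge. Once this reduction is in place, none of the extrapolation machinery ($t_k^i$, $y_k^i$, $z_k^i$) needs to enter the rate analysis, and what remains is the now-standard KL convergence calculation specialized to the desingularising exponent $\theta\in(0,\tfrac{1}{2}]$.
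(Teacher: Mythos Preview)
Your proposal is correct and mirrors the paper's own proof almost exactly: both derive sufficient decrease and a subgradient bound solely from the monitor iterate $v_{k+1}^i$, use the safeguard \eqref{x_k1} to pull these back to $x_{k+1}^i$, and then run the standard KL scalar recursion with $\phi(u)=\frac{\kappa}{\theta}u^{\theta}$ to obtain the stated rate. The only cosmetic difference is that the paper carries the entire KL argument on $r_k:=Q_i(v_k)-Q_i^{*}$ (since the subgradient estimate lives at $v_{k+1}$, not $x_k$) and only transfers to $Q_i(x_k)-Q_i^{*}$ in the final line via $Q_i(x_k)\leq Q_i(v_k)$; you should do the same to keep the bookkeeping clean.
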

\begin{proof}
See Appendix \ref{thm1}.
\end{proof}

For a special case where $f_1^i$ is convex, the following simplified steps \eqref{y_k2}-\eqref{x_k2} can be adopted to accelerate the computation. The monitoring update step $v_{k+1}$ is omited due to the convexity of $f_1^i$. This algorithm is slightly different with the one in \cite{beck2009fast} in terms of the projection step. Since $Q_i$ is convex, then algorithm \eqref{y_k2}-\eqref{x_k2} converges to a unique optimal solution.
\begin{align}
y_k^i &= x_k^i+\frac{t_{k-1}^i}{t_k^i}(z_k^i-x_k^i) + \frac{t_{k-1}^i-1}{t_k^i}(x_k^i-x_{k-1}^i),\label{y_k2}\\
z_{k+1}^i &= \mathrm{proj}_C\left(\mathrm{prox}_{\lambda_y^i f_2^i}(y_k^i- \lambda_y^i \nabla f_1^i(y_k^i))\right),\\
t_{k+1}^i & = \left(1+\sqrt{4(t_k^i)^2+1}\right)/2,\\
x_{k+1}^i & = \begin{cases}
z_{k+1}^i, & \mathrm{if}\ Q_i(z_{k+1}^i)\leq Q_i(x_{k}^i),\\
x_{k}^i, & \mathrm{Otherwise}.
\end{cases}\label{x_k2}
\end{align}
Similar to \eqref{y_k1}-\eqref{x_k1}, when the algorithm \eqref{y_k2}-\eqref{x_k2} converges, the values of $x_k^i,\ y_k^i$ and $z_k^i$ are the same which give the optimal cognition vector $m^i$.

\begin{algorithm}[!t]
\caption{Cognitive Network Formation for Player $i$}
\label{algorithm1} 
\begin{enumerate}
\item Input $f_1^i$, $f_2^i$ and $C=[0,1]^{N-1}$
\item Initialize parameters $z_0^i$, $x_0^i$, $x_1^i$, $t_0^i$, $t_1^i$, $\lambda_x^i$ and $\lambda_y^i$ 
\item \textbf{for} $k=1,2,...$ \textbf{do}
\item \quad \textbf{if} $f_1^i$ is convex
\item \qquad Update $y_k^i$, $z_{k+1}^i$, $v_{k+1}^i$, $t_{k+1}^i$ and $x_{k+1}^i$ through \eqref{y_k2}-\eqref{x_k2}
\item \quad \textbf{else}
\item \qquad Update $y_k^i$, $z_{k+1}^i$, $v_{k+1}^i$, $t_{k+1}^i$ and $x_{k+1}^i$ through \eqref{y_k1}-\eqref{x_k1}
\item \quad \textbf{end}
\item \textbf{end for}
\item \textbf{Return} $m^i = x_k^i$
\end{enumerate}
\end{algorithm}

\textit{Homogeneous Users Case:} When the agents in the IoT network are homogeneous, i.e., $R_{ii}^i=R_{jj}^j$, $R_{ij}^i = R_{ji}^j$, $r_i=r_j=r$, $\beta_i=\beta_j=\beta\leq N-1$, $\forall i,j\in\mathcal{N}$, we can characterize the closed form solutions of decisions $u_i$ and  $m^i$, $\forall i\in\mathcal{N}$. Specifically, we obtain, $\forall i\in\mathcal{N}$,
\begin{equation}\label{homo_solution}
\begin{split}
m^{i*}_{j}&= \frac{\beta}{N-1},\ \forall j\neq i, j\in\mathcal{N},\\
 u_i^* &= \frac{r}{R_1-\beta R_2},
 \end{split}
\end{equation}
where $R_1 = R_{ii}^i$ and $R_2 = R_{jk}^i$ for $j\neq i$ and $k\neq i$. The results indicate that, at GNE, the cognitive network that each user $i$ forms, $i\in\mathcal{N}$, is symmetric, i.e., the allocated attention to other users $j\neq i$ by user $i$ is the same. In addition, with a larger $\beta$, the users spend more effort on the security management at GNE. This can be interpreted as follows: with a better perception of cyber risks in the IoT, the users becomes better informed of the risks and make best effort to reduce the security loss.

\begin{algorithm}[!t]
\caption{Strategic Risk Management with Bounded Rationality}
\label{algorithm2} 
\begin{enumerate}
\item Initialize parameters in the game $\mathcal{G}$, cognition cost $\alpha_i$, cognitive networks $m^i,\ \forall i\in\mathcal{N}$
\item \textbf{Do}
\\ \textbf{Best response dynamics:} 
\item Based on $m^i$, $i\in\mathcal{N}$, player $i$ determines their best-response strategy through \eqref{sparse_BR_dynamic} iteratively until reaching a BRNE
\\ \textbf{Cognitive network formation:} 
\item Each player $i$, $i\in\mathcal{N}$, forms their cognitive network $m^i$ through Algorithm \ref{algorithm1}
\item \textbf{Until} $[m^i]_{i\in\mathcal{N}}$ and $[u_i]_{i\in\mathcal{N}}$ converge
\item \textbf{Return} $m^i$ and $u_i$, $\forall i\in\mathcal{N}$, which form a GNE
\end{enumerate}
\end{algorithm}

\subsection{Integrated Algorithm and Discussions}\label{integrated_algorithm}
For clarity, we summarize the combined algorithm including the strategic security decision-makings of players in the IoT networks and their corresponding cognitive network formations together in Algorithm \ref{algorithm2}. The integrated algorithm exhibits an alternating pattern between the best-response of security management and the strategic cognitive network formation of IoT users.

We next discuss some observations obtained from the algorithm. The steps $z_{k+1}^i$ and $v_{k+1}^i$ in \eqref{z_k1} and \eqref{v_k1} of the algorithm can be simplified further. Here, we only analyze $z_{k+1}^i$, and the procedure follows for $v_{k+1}^i$. First, we have $\nabla f_1^i(y_k^i) = \Lambda^i(y_k^i - e_{N-1})$. Then, $[y_k^i- \lambda_y^i \nabla f_1^i(y_k^i)]_j = [y_k^i- \lambda_y^i\Lambda^i(y_k^i - e_{N-1})]_j\geq 0$, $\forall j\neq i, j\in\mathcal{N}$. Thus, based on \eqref{prox_f2}, we obtain
\begin{align*}
z_{k+1}^i
&= \mathrm{proj}_C\left(y_k^i- \lambda_y^i\Lambda^i(y_k^i - e_{N-1})- \lambda_y^i\alpha_i e_{N-1}\right)\\
&= \mathrm{proj}_C\left(y_k^i+ \lambda_y^i(\Lambda^i(e_{N-1}- y_k^i)- \alpha_i e_{N-1})\right).
\end{align*}
The update of player $i$'s attention on player $j$ at step $k+1$, $j\neq i$, can be expressed as
\begin{align*}
&[z_{k+1}^i]_j=\\
& \mathrm{proj}_{[0,1]}\left([y_k^i]_j+ \lambda_y^i\left(\frac{R_{ij}^i}{R_{ii}^i} u_j \sum_{p\neq i,p\in\mathcal{N}}R_{ip}^i u_p \left(1-[y^i_k]_p\right) - \alpha_i \right)\right).
\end{align*}

When $\frac{R_{ij}^i}{R_{ii}^i} u_j \sum_{p\neq i,p\in\mathcal{N}}R_{ip}^i u_p \left(1-[y^i_k]_p\right) \geq \alpha_i$ which is equivalent to $\sum_{p\neq i,p\in\mathcal{N}}R_{ip}^i u_p [y^i_k]_p \leq \sum_{p\neq i,p\in\mathcal{N}}R_{ip}^i u_p - \frac{R_{ii}^i}{R_{ij}^i u_j} \alpha_i$, we know that $[z_{k+1}^i]_j\geq [z_{k}^i]_j$. The player $i$'s attention on player $j$ increases at step $k+1$, since there remains extra cognition resources to be allocated which corresponds to a phenomenon called \textit{filling the inattention}. In addition, a smaller cognition cost $\alpha_i$ yields a larger upper bound for $\sum_{p\neq i,p\in\mathcal{N}}R_{ip}^i u_p [y^i_k]_p$, and hence player $i$ can pay more attention to other players which again leads to the observation of filling the inattention. 

In the IoT network, user $j$'s decision has an impact on the strategy of user $i$. To illustrate the discovery, we consider two groups of IoT users, and one group of users have more incentive to secure the devices, i.e., their security investment is larger. Then, from user $i$'s perspective, his attention on user $j$ is influenced by the term ${R_{ii}^i}/({R_{ij}^i u_j})$. When user $j$ lies in the group of a higher investment $u_j$, then the upper bound $\sum_{p\neq i,p\in\mathcal{N}}R_{ip}^i u_p - \frac{R_{ii}^i}{R_{ij}^i u_j} \alpha_i$ is larger. Therefore, each IoT user will allocate more cognition resources to the users in the group with a higher security standard which exposes the phenomenon of \textit{emergence of partisanship}. 

In a heterogeneous IoT network, the system parameters $R_{ij}^i$, $R_{ii}^i$, and decisions $u_i$ are generally different. Then, for player $i\in\mathcal{N}$, the term $\frac{R_{ij}^i}{R_{ii}^i} u_j \sum_{p\neq i,p\in\mathcal{N}}R_{ip}^i u_p$, $j\neq i,\ j\in\mathcal{N}$, identifies the most influential agents in the network. Moreover, the critical agents to pay attention to for each user almost overlap, resulting the phenomenon of \textit{attraction of the mighty} during the cognitive network formation.

We illustrate the discovered phenomena in Section \ref{simulation}.



%

\section{Case Studies}\label{simulation}
We use case studies of IoT-enabled smart communities shown in Fig. \ref{SH} to corroborate the designed algorithms and illustrate the security management of bounded rational agents in this section.

\subsection{Effectiveness of Algorithm \ref{algorithm1}}
First, we verify the effectiveness of Algorithm \ref{algorithm1}. Specifically, we choose $N=10$, $\alpha = 100$ and generate a $9\times 9$ random matrix which is not definite for $\Lambda^i$. Thus, $f_2^i$ in \eqref{fs} is not convex. The iterative updates through the designed proximal algorithm are presented in Fig. \ref{algorithm1_converge} which reveal fast convergence to the steady state. In addition, the algorithm yields a sparse cognition vector $m = [1,0,0,0,0.41,1,0,0.30,0.26]$. To investigate the robustness of the algorithm, we study the same network as in Fig. \ref{10_agents_1} with different initial conditions. The results are shown in Figs. \ref{10_agents_2} and \ref{10_agents_3}. We can verify that the steady states in Figs. \ref{10_agents_2} and \ref{10_agents_3} are the same as the ones in Fig. \ref{10_agents_1} which corroborate the robustness of the algorithm to initial conditions. To further verify the algorithm, we also investigate the network containing different numbers of agents. The results with 7 and 15 agents are presented in Figs. \ref{7_agents} and \ref{15_agents}. Both results indicate that the designed algorithm is reliable in computing the sparse steady strategy.  After conducting sufficient number of case studies, we conclude that the algorithm is effective with probability 1 under arbitrary number of agents.


\begin{figure}[t]
  \centering
    \subfigure[$N=10$, case 1]{
    \includegraphics[width=0.8\columnwidth]{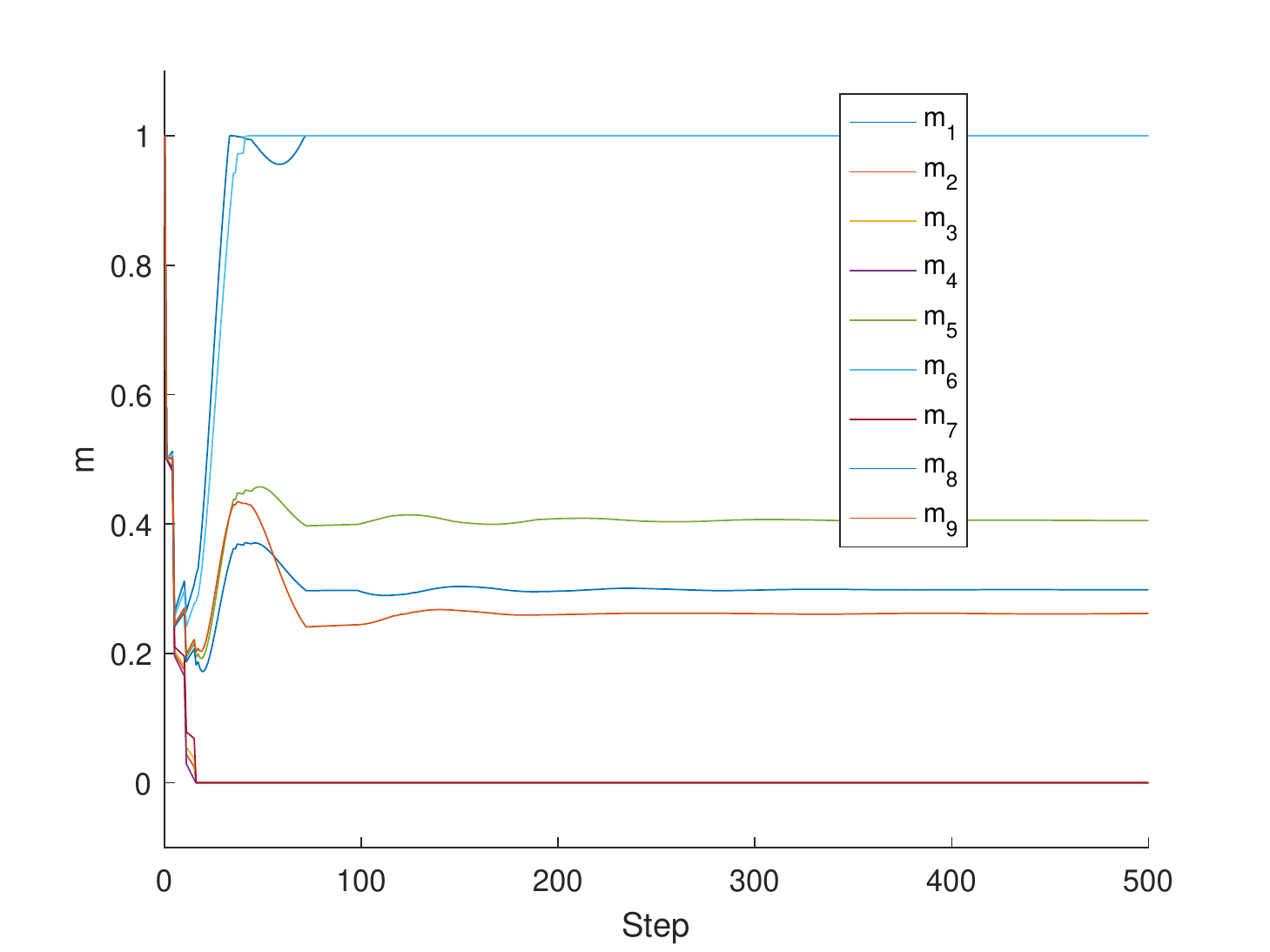}\label{10_agents_1}}
        \subfigure[$N=10$, case 2]{
    \includegraphics[width=0.48\columnwidth]{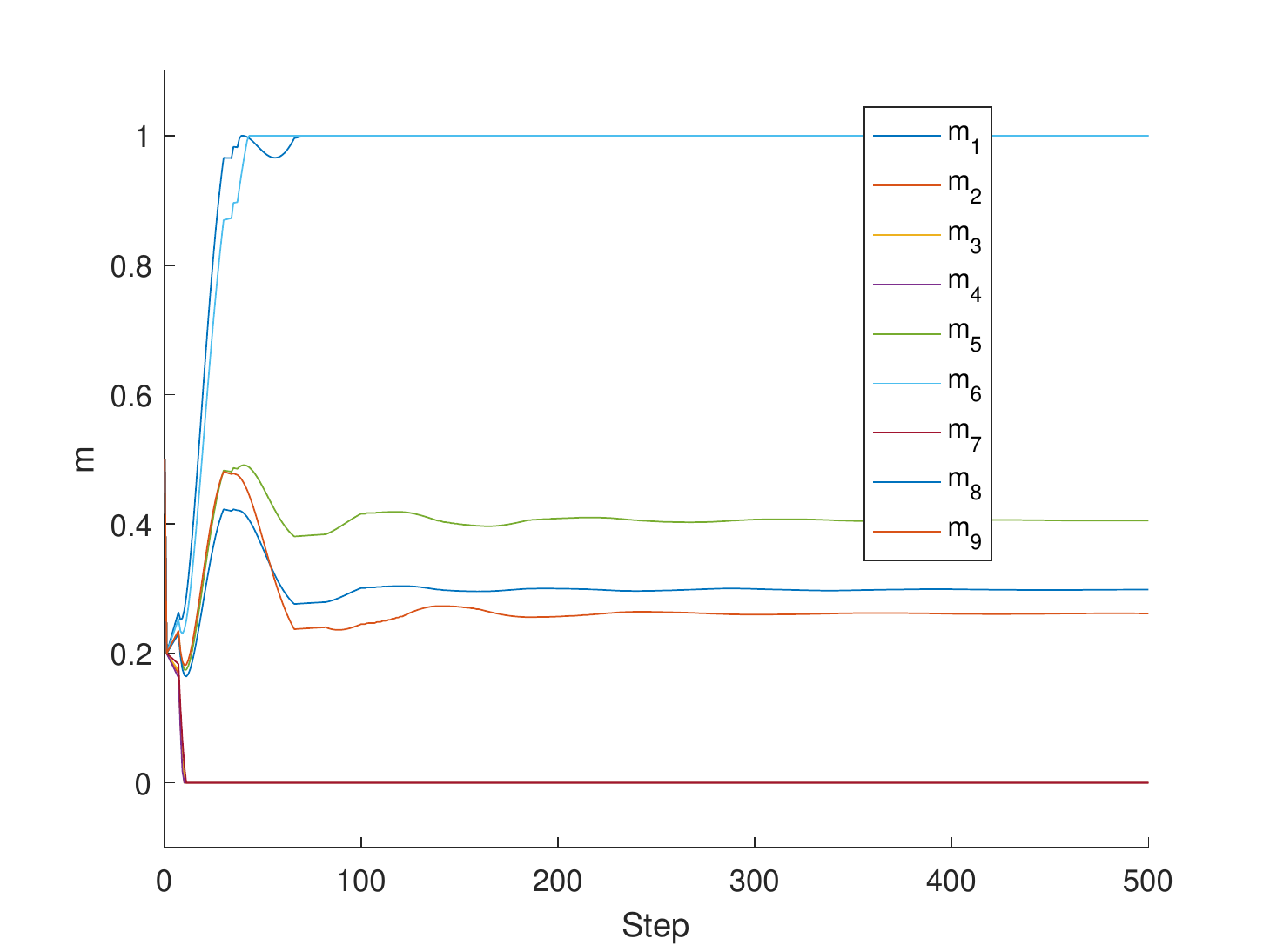}\label{10_agents_2}}
        \subfigure[$N=10$, case 3]{
    \includegraphics[width=0.48\columnwidth]{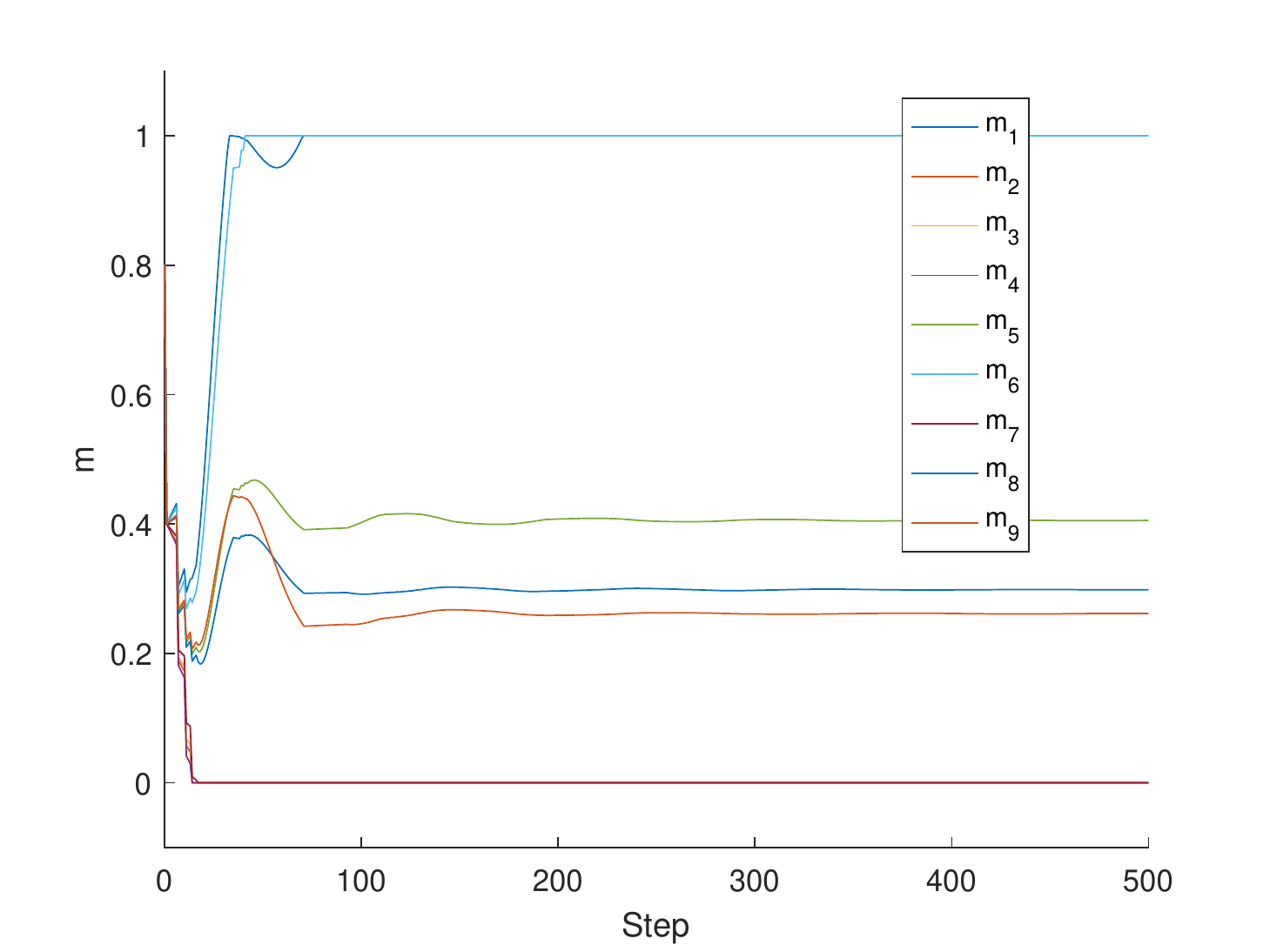}\label{10_agents_3}}
  \subfigure[$N=7$]{
    \includegraphics[width=0.48\columnwidth]{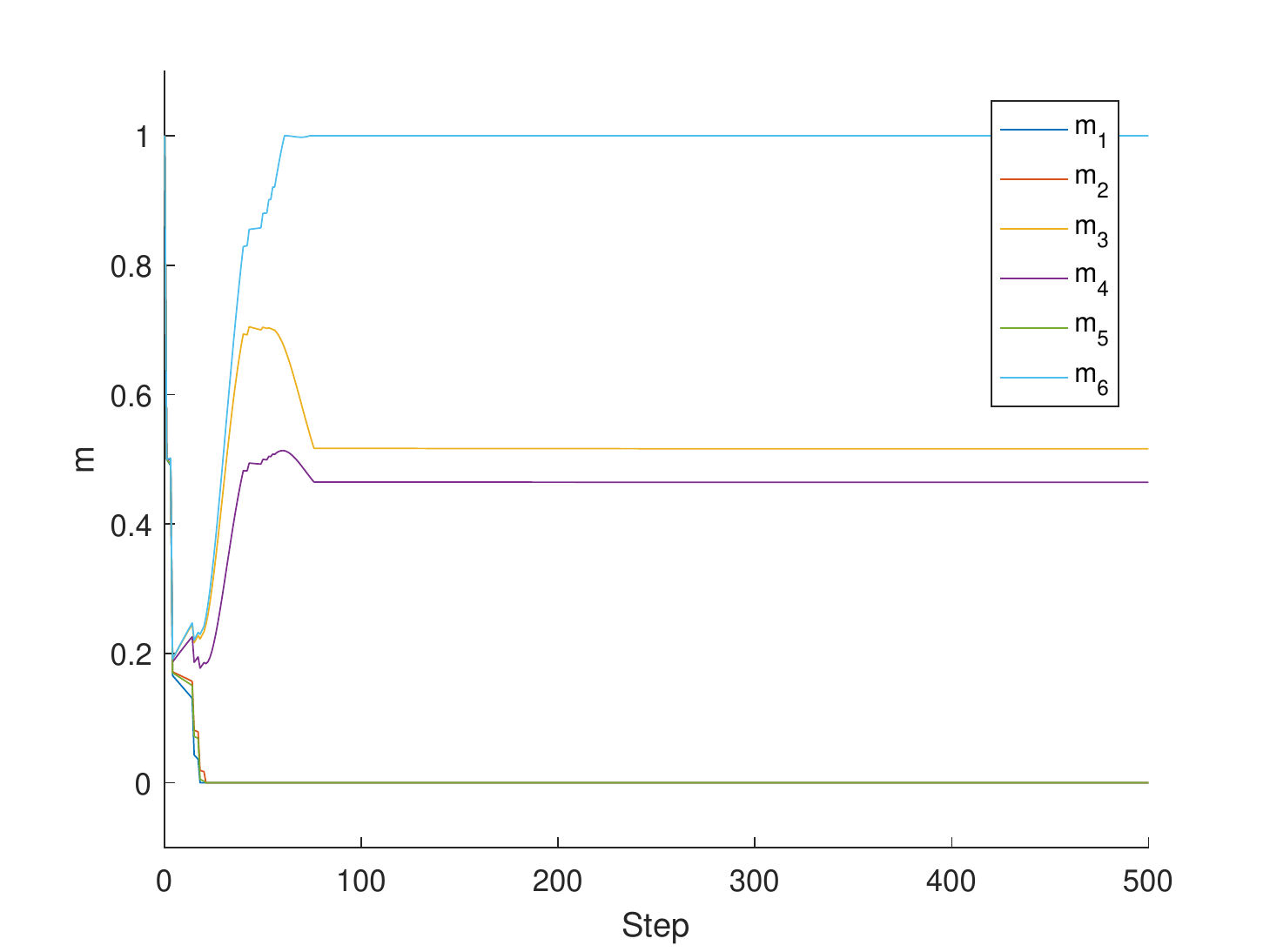}\label{7_agents}}
	 \subfigure[$N=15$]{
    \includegraphics[width=0.48\columnwidth]{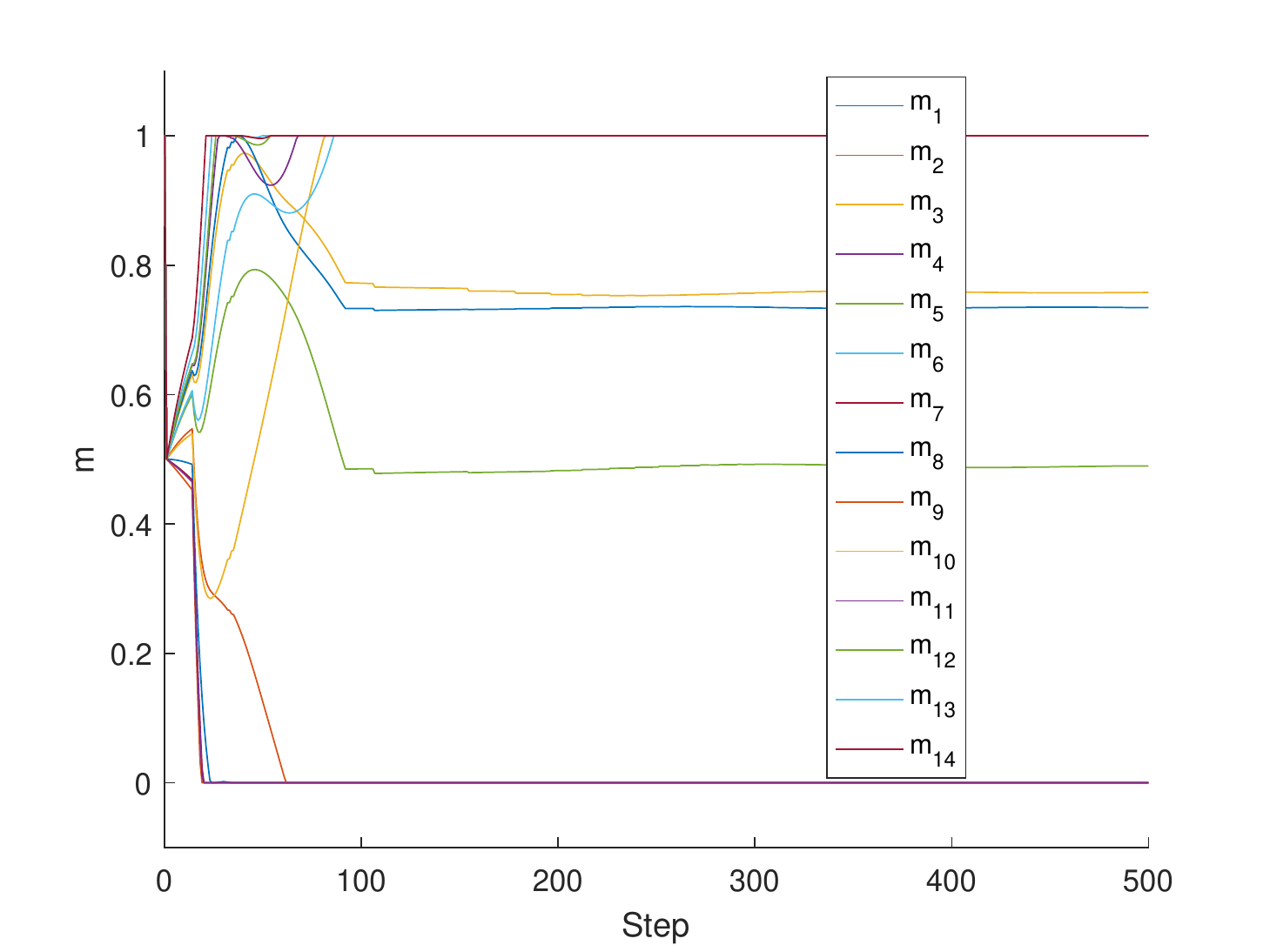}\label{15_agents}}
  \caption[]{Performance of Algorithm \ref{algorithm1} on a nonconvex $f_2^i$ in \eqref{fs}. (a), (d), and (e) show the results with 10, 7 and 15 agents in the network, respectively. The network configurations in (a), (b), and (c) are the same, but their initial conditions are different. The algorithm yields the same result for cases in (a), (b), and (c) which shows the robustness of the algorithm. }
  \label{algorithm1_converge}
\end{figure}

\subsection{Homogeneous Smart Homes}\label{homo_case}
In this case study, we consider $N=10$ homogeneous households in the smart community, i.e., all the parameters are the same for each agent. Specifically, the parameters are chosen as follows: $R_{jk}^i = 20 \ \mathrm{unit}/\mathrm{k}\$^2$ if $j=k=i$, otherwise $R_{jk}^i = 1\ \mathrm{unit}/\mathrm{k}\$^2$, $\forall i$; $r_i=25\ \mathrm{unit}/\mathrm{k}\$$, $\forall i\in\mathcal{N}$ and $\alpha^i=\alpha$, $\forall i$, is chosen to satisfy $\beta = \Vert m^i\Vert_1=3$. The selected parameters indicate that the security level of a household is mainly determined by its own security management policy rather than the ones of connected households. Recall that we have obtained the analytical solutions for homogeneous case in \eqref{homo_solution} which yield $m^i_j = \frac{1}{3},\ \forall j\neq i,j\in\mathcal{N}$ and $u_i = \frac{25}{17}=1.47\mathrm{k}\$$. Thus, each agent allocates attention resource equally to their connected neighbors.  Fig. \ref{case1} presents the results by using Algorithm \ref{algorithm2}, where the step index represents an iteration between two components of cognitive network formation and security investment. We can conclude that the rational decision yields less cost for players compared with their irrational decision counterparts due to the bounded rationality. Furthermore, the algorithm gives the same cognitive network structure as the one obtained from the analytical results which corroborates the proposed integrated algorithm.

\begin{figure}[t]
  \centering
  \subfigure[rational strategy]{
    \includegraphics[width=0.48\columnwidth]{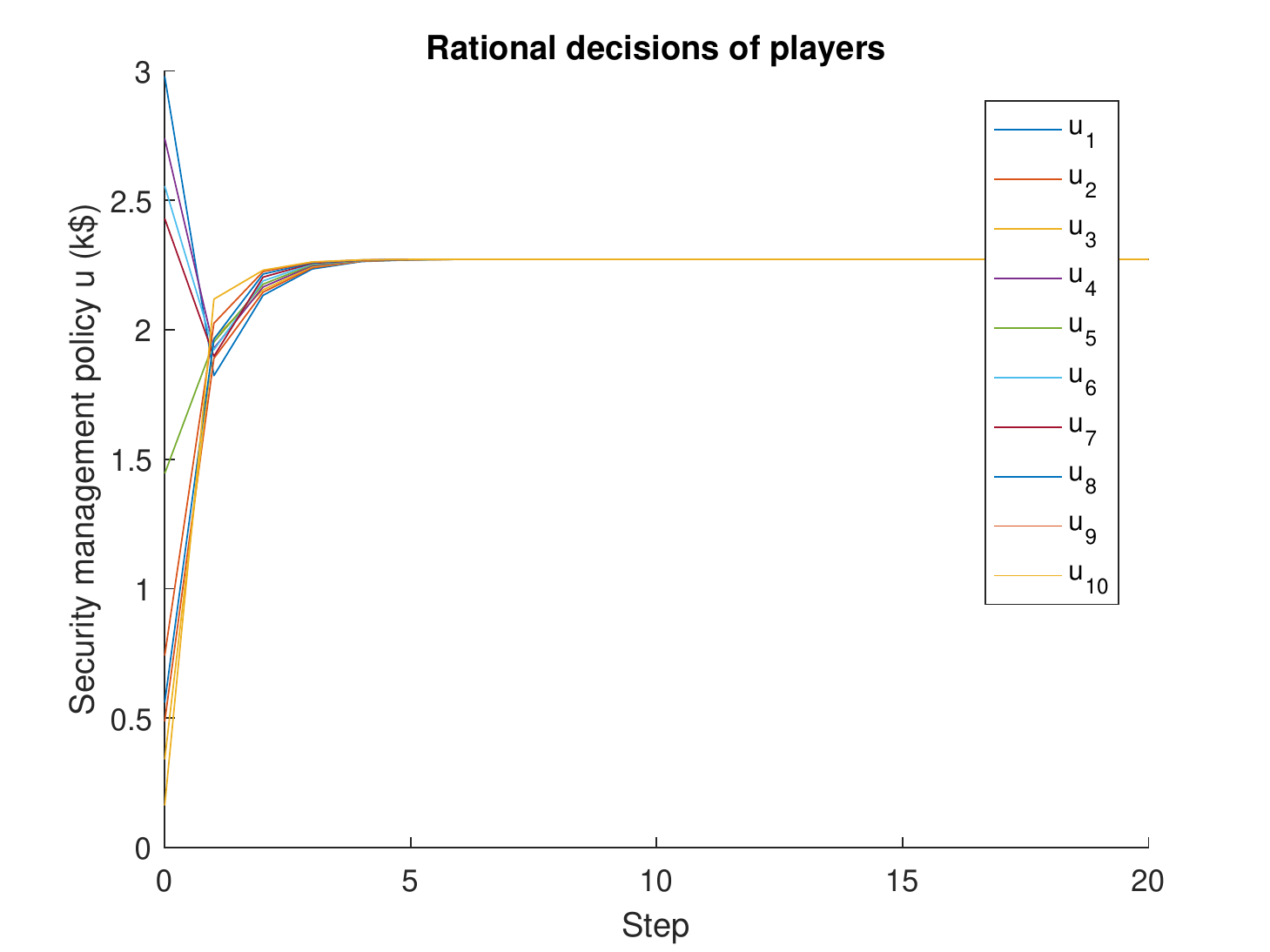}}
	 \subfigure[cost under rational strategy]{
    \includegraphics[width=0.48\columnwidth]{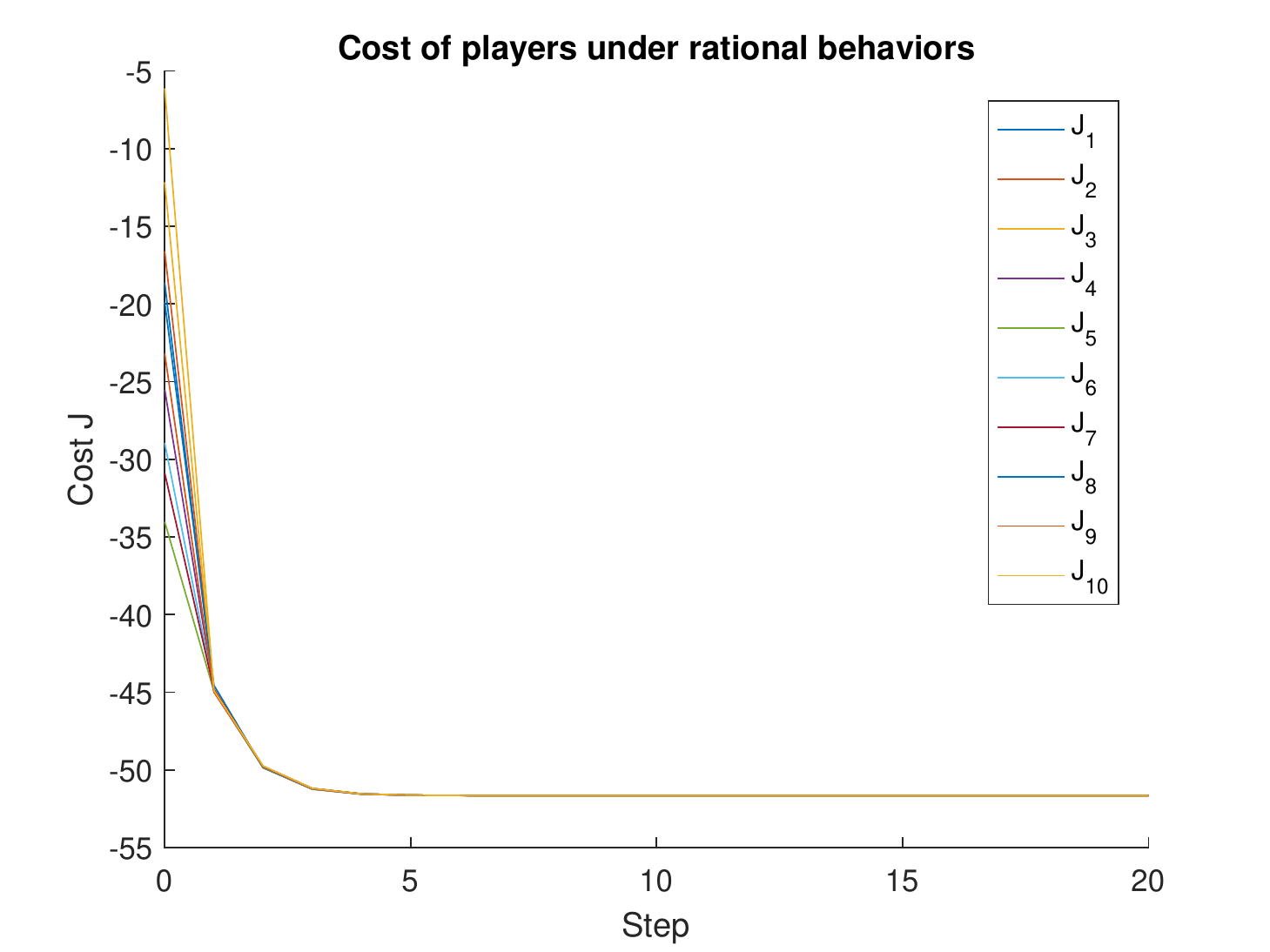}}
    	 \subfigure[bounded rational strategy]{
    \includegraphics[width=0.48\columnwidth]{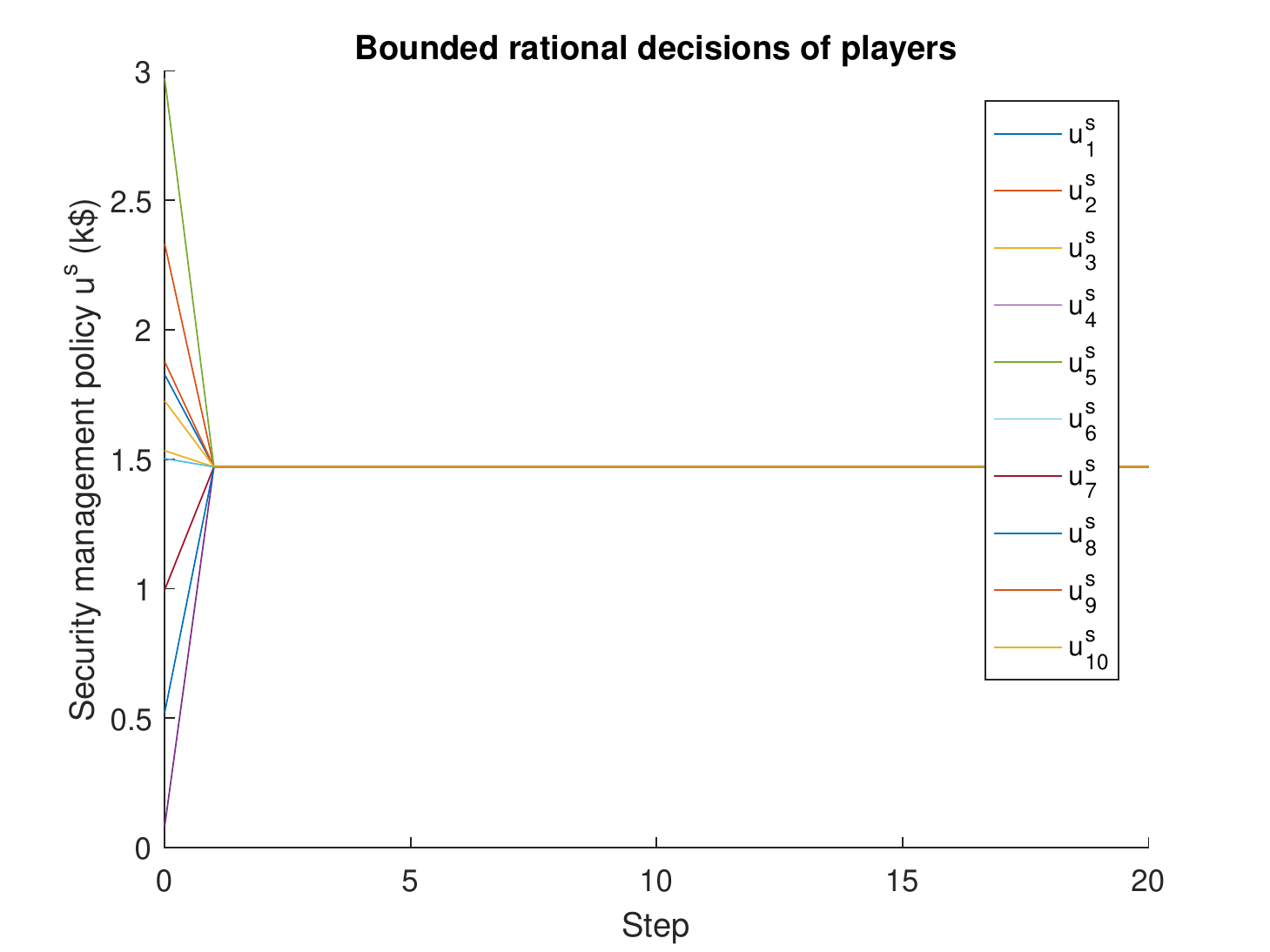}}
    	 \subfigure[cost under bounded rational strategy]{
    \includegraphics[width=0.48\columnwidth]{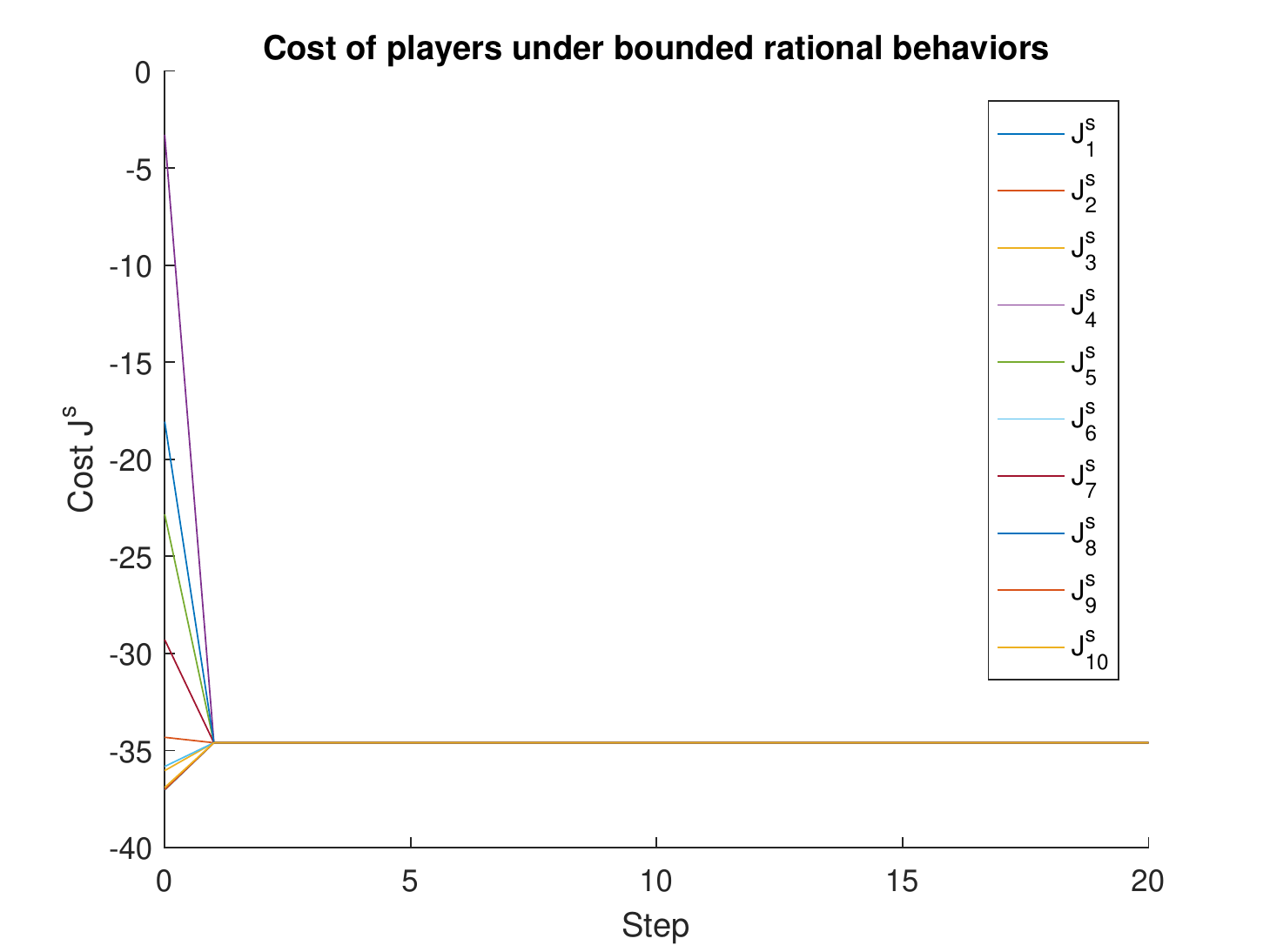}}
        	 \subfigure[cognitive network]{
    \includegraphics[width=0.7\columnwidth]{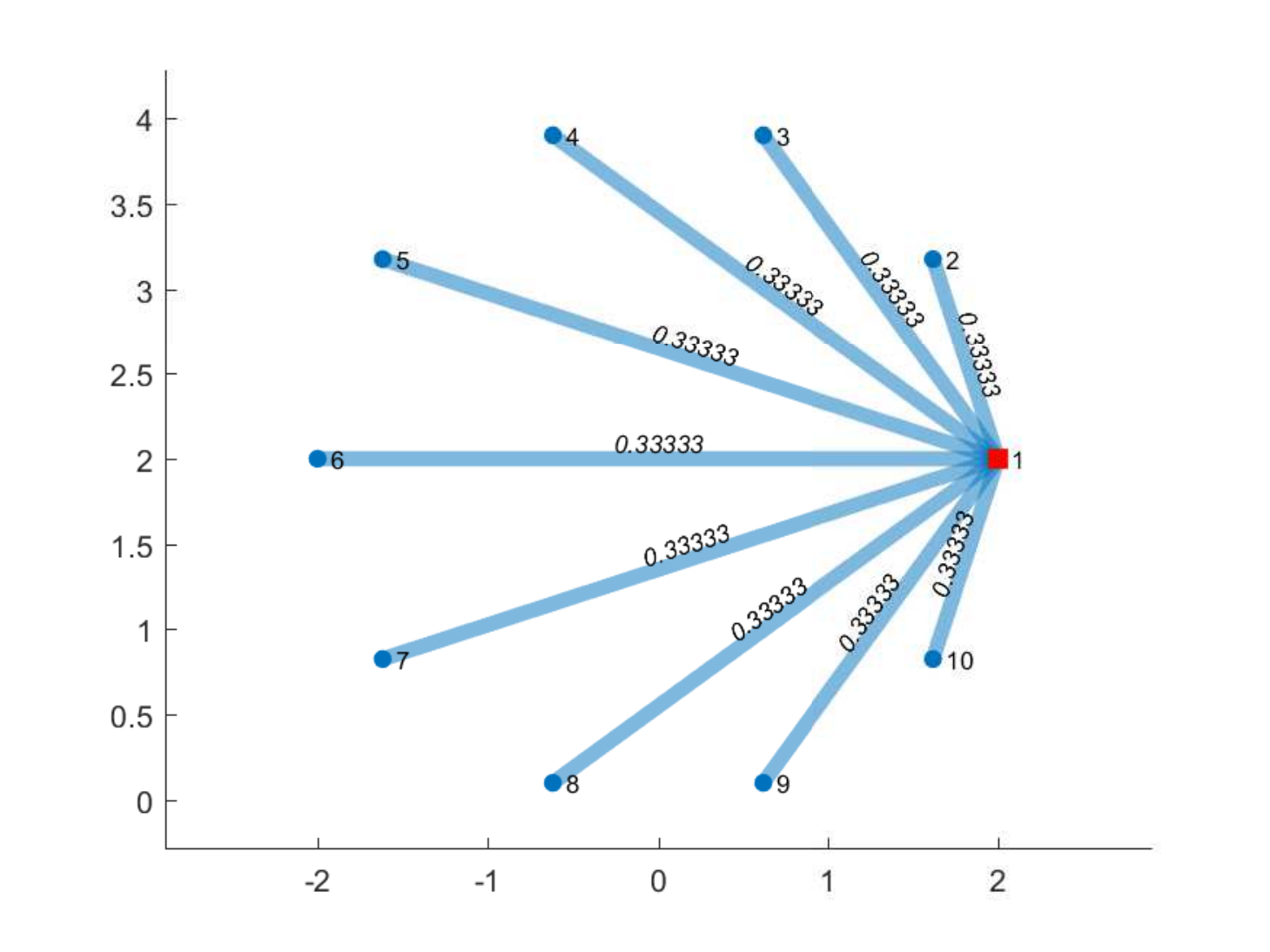}}
  \caption[]{(a) and (b) are the rational decision of players and the corresponding cost, respectively. (c) and (d) are the counterparts of (a) and (b) with bounded rationality. (e) illustrates the formed cognitive networks which is symmetric in this homogeneous case.}
  \label{case1}
\end{figure}

\subsection{Emergence of Partisanship}\label{two_group}
We next investigate a smart community including two groups of agents denoted by $G1$ and $G2$, respectively. Specifically, $G1$ includes 5 agents, $G1=\{1,...,5\}$, and $G2$ contains 10 agents, $G2=\{6,...,15\}$.  The parameters are the same as those in Section \ref{homo_case} except that for agents in $G1$, $r_i=40\ \mathrm{unit}/\mathrm{k}\$$, $\forall i\in G1$, to distinguish two groups of users. Thus, the agents in $G1$ have more incentives to secure their IoT products than those in $G2$. 
Fig. \ref{case2} shows the results.
For agents in $G1$, the cognitive network is characterized by $m^i = [0.75,...,0.75,0,...,0]$, $i\in G1$, and for agents in $G2$, $m^j = [0.6,...,0.6,0,...,0]$, $j\in G2$. Therefore, with limited cognition, all agents only pay attention to the security decisions made by smart homes in $G1$ which yields the phenomenon of partisanship. We also verify that the RBP increases due to the bounded rationality.

\subsection{Filling the Inattention}
Under the setting of Section \ref{two_group}, we further assume that the agents have better cognitive ability and can perceive more cyber risks in the smart community in a way that $\beta = \Vert m^i\Vert_1=8$.  Other parameters are the same as those in Section \ref{two_group}. Fig. \ref{case3} presents the results. Specifically, we obtain $m^i = [1,...,1,0.4,...,0.4]$ for $i\in G1$ and  $m^j = [1,...,1,0.33,...,0.33]$ for $j\in G2$, which show that the agents in $G1$ play a critical role in the security risk management of smart community. Furthermore, with more cognition resource, the agents in $G2$ that are not paid attention to previously in Section \ref{two_group} are considered by other households. This phenomenon is termed as filling the inattention.

\begin{figure}[t]
  \centering
    	 \subfigure[bounded rational strategy]{
    \includegraphics[width=0.48\columnwidth]{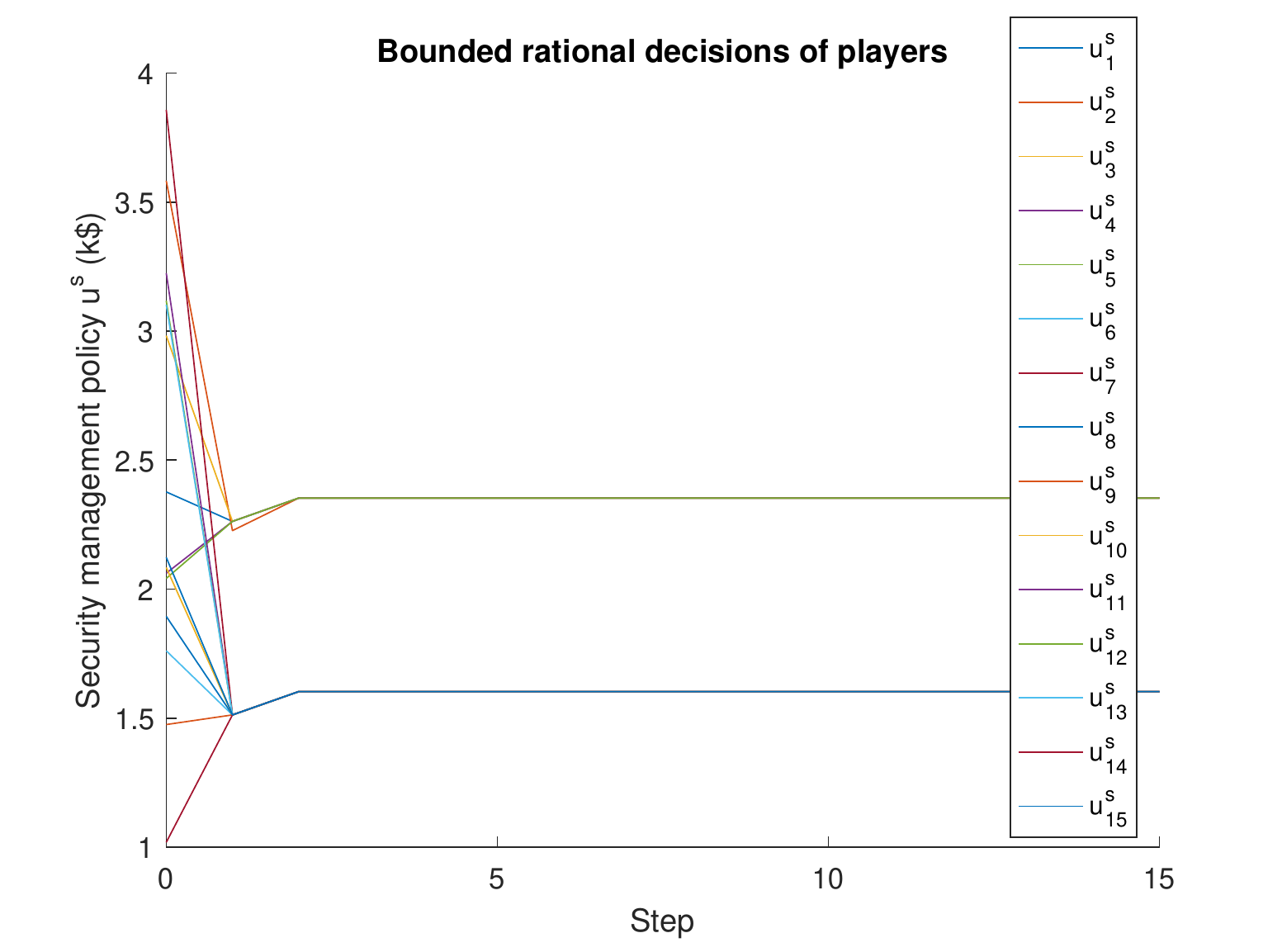}}
    	 \subfigure[RBP]{
    \includegraphics[width=0.48\columnwidth]{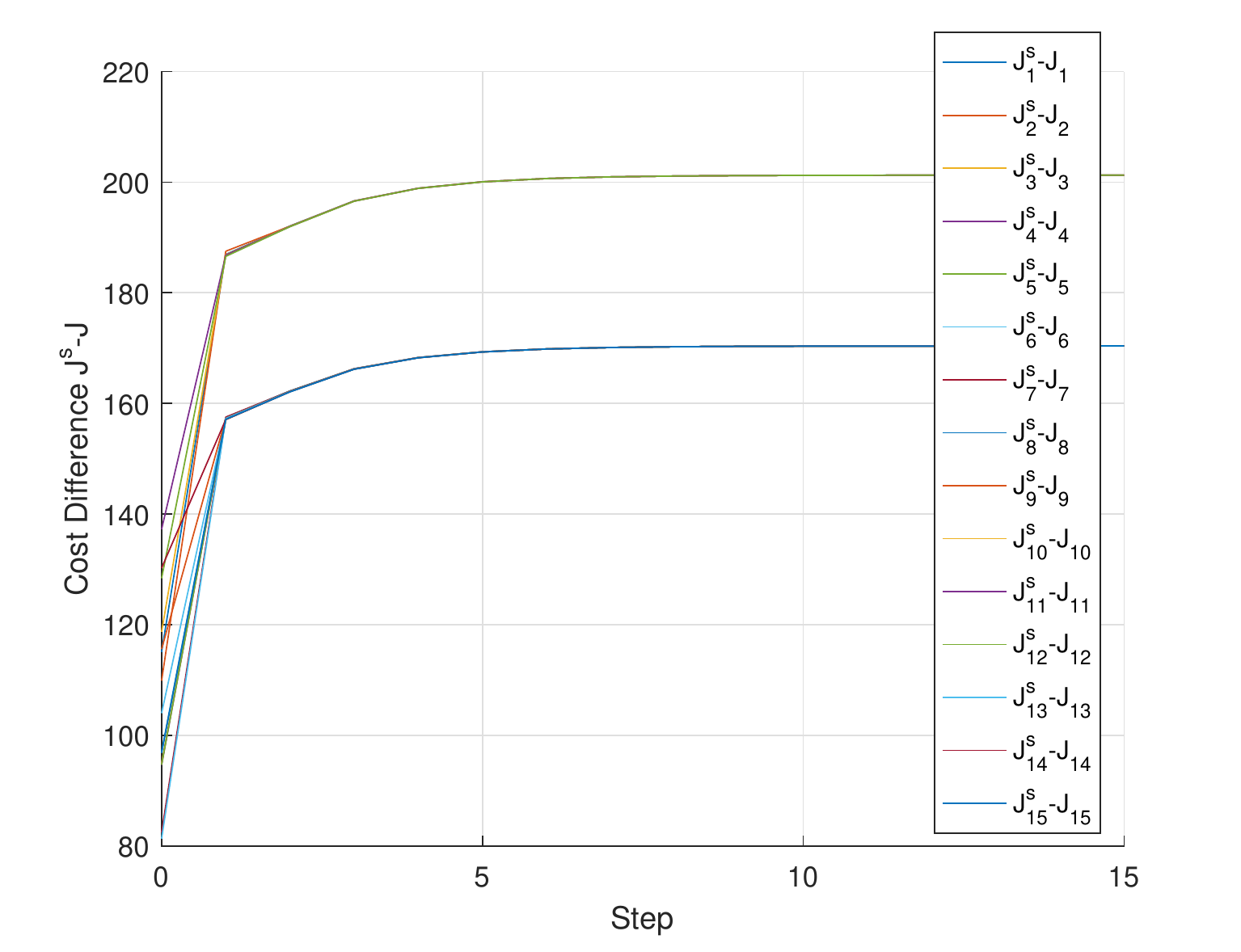}}
        	 \subfigure[cognition vector $m$]{
    \includegraphics[width=0.48\columnwidth]{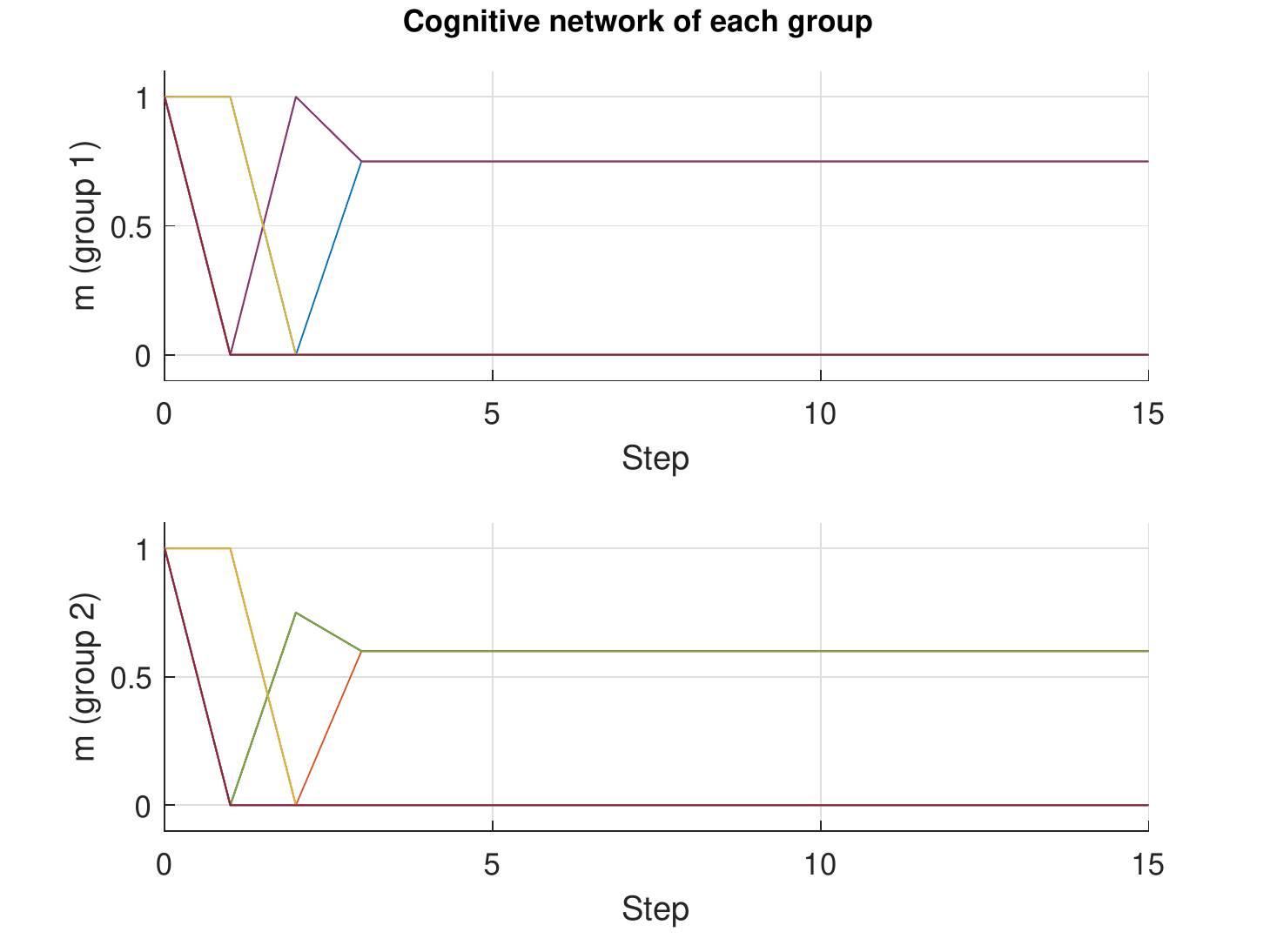}}
            	 \subfigure[cognitive network]{
    \includegraphics[width=0.48\columnwidth]{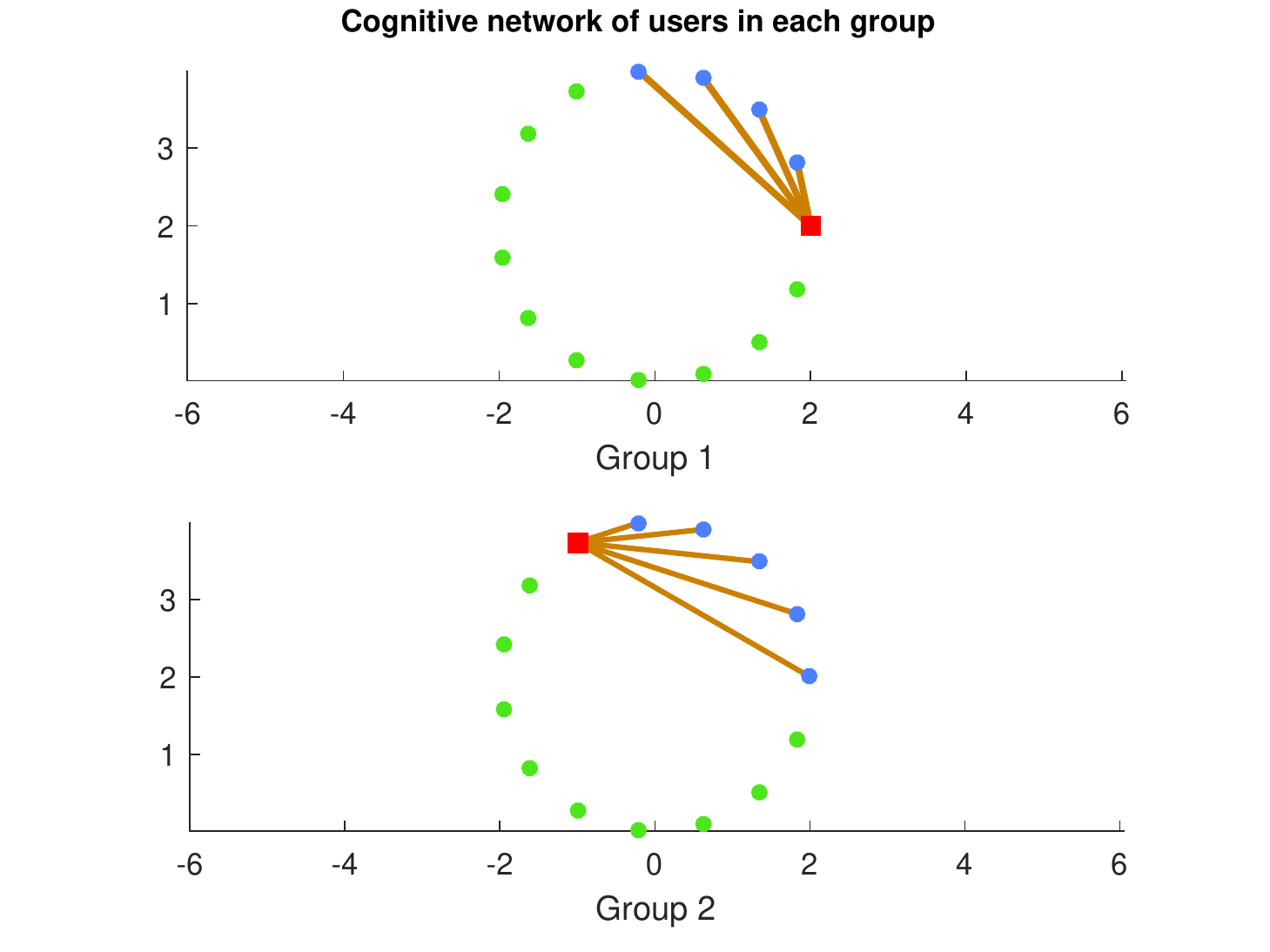}}
  \caption[]{(a) shows the bounded rational strategy of players, indicating that players in $G1$ have a lower cost. (b) depicts the RBP which corroborates that the security risk of users increases under the bounded rational model comparing with the fully rational one. (c) and (d) illustrate the formed sparse cognitive networks. In (d), blue and green dots are agents in $G1$ and $G2$, respectively, and the red ones are representatives in each group. In the network, all agents only allocate cognition resource to smart homes in $G1$ at GNE, leading to the emergence of partisanship.}
  \label{case2}
\end{figure}

\begin{figure}[t]
  \centering
  \subfigure[bounded rational strategy]{
    \includegraphics[width=0.48\columnwidth]{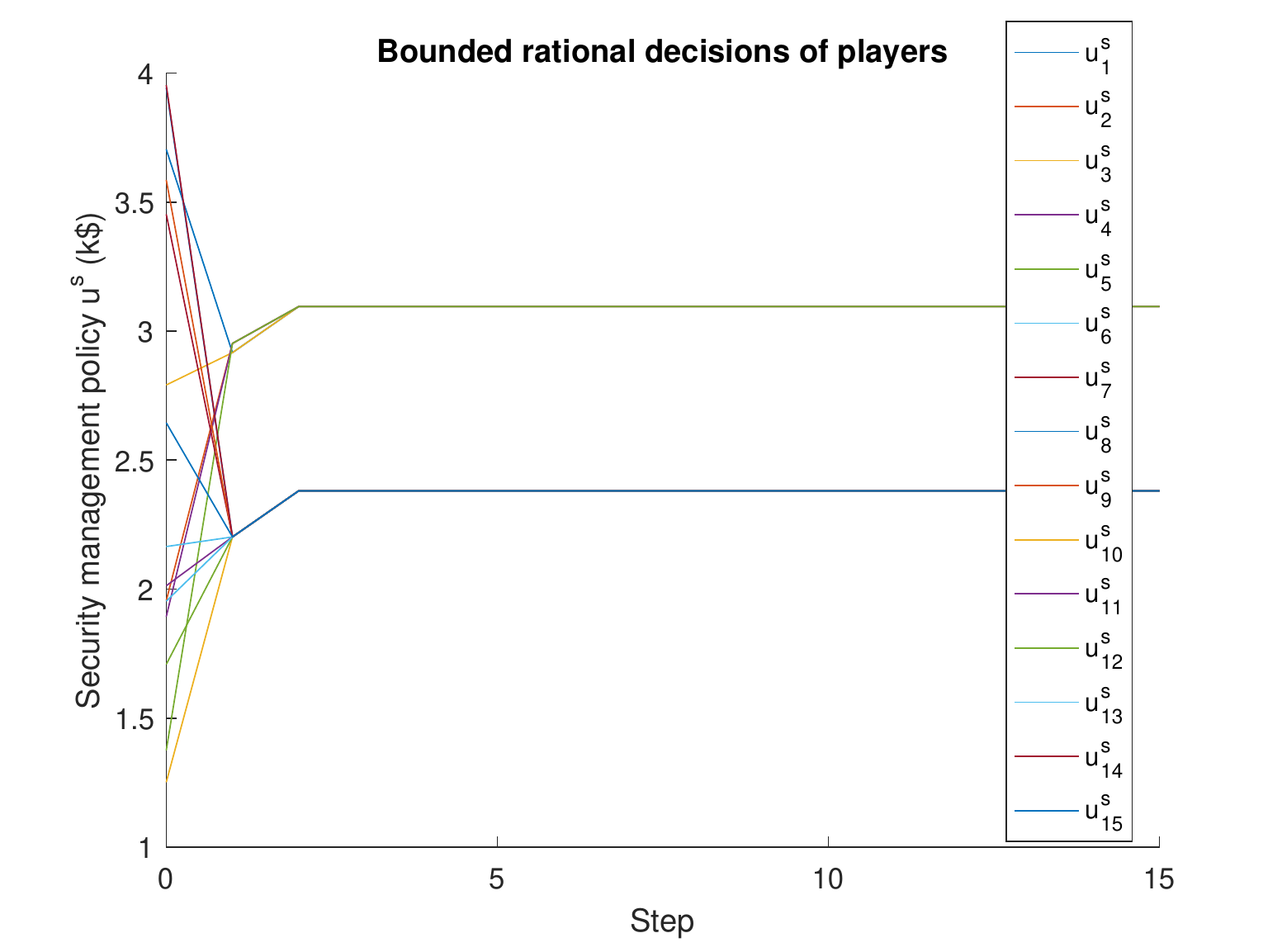}}
    	 \subfigure[RBP]{
    \includegraphics[width=0.48\columnwidth]{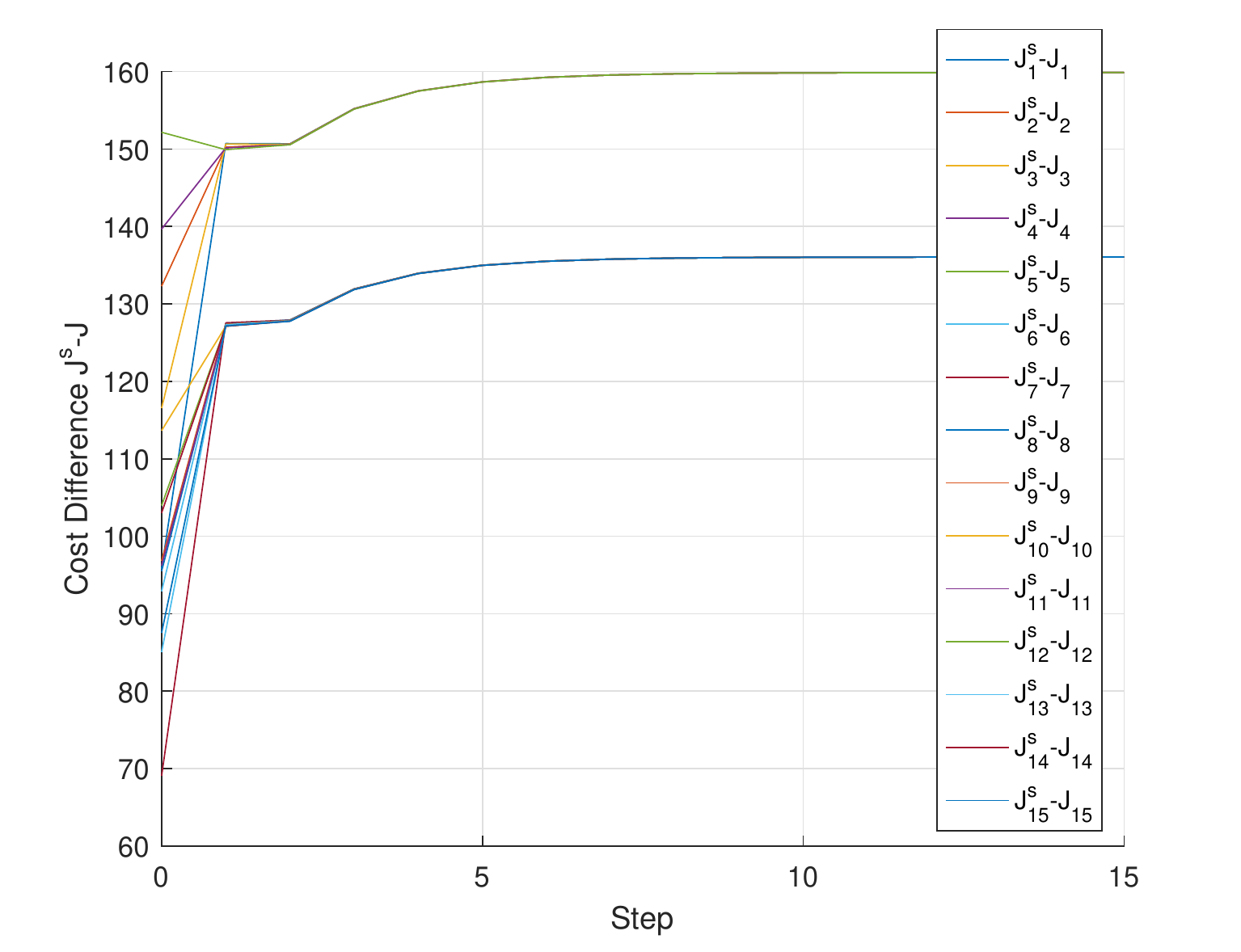}}
        	 \subfigure[cognition vector $m$]{
    \includegraphics[width=0.48\columnwidth]{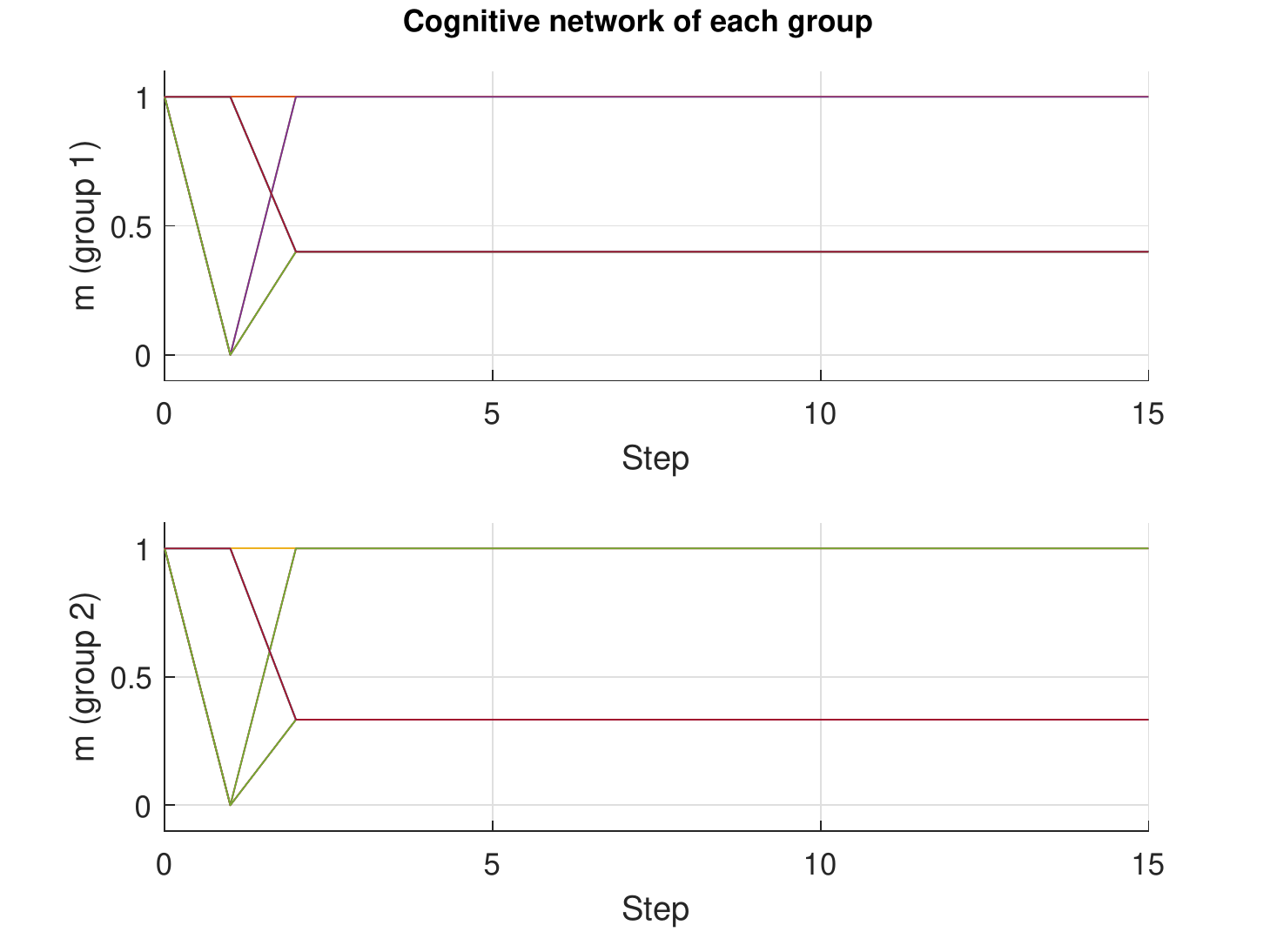}}
        	 \subfigure[cognitive network]{
    \includegraphics[width=0.48\columnwidth]{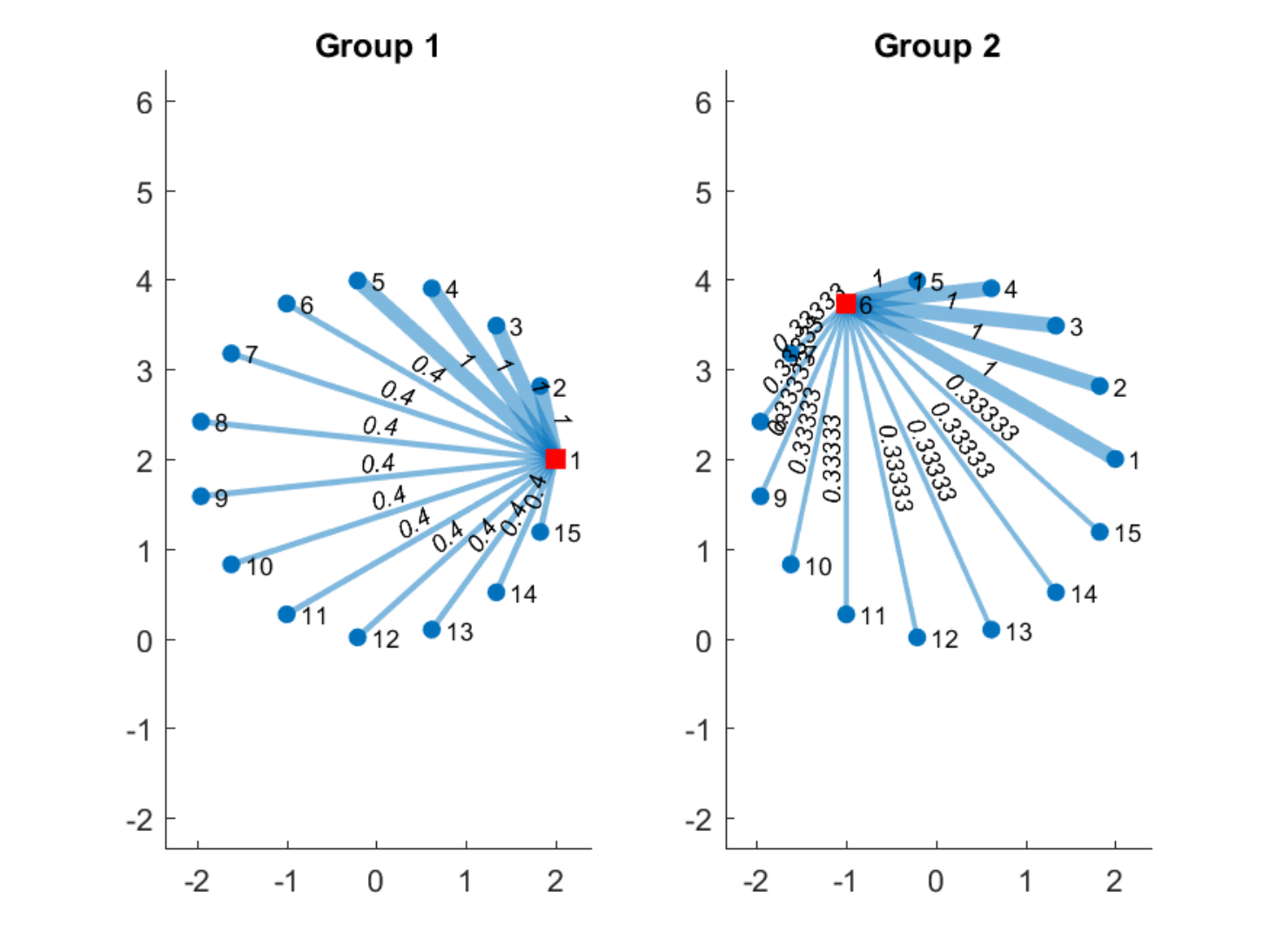}}
  \caption[]{(a) shows the bounded rational decisions, and (b) presents the RBP which is positive. (c) and (d) illustrate the formed cognitive networks. This case study indicates that players in $G1$ are more critical that those in $G2$ in the cognitive networks. In addition, cognition resource is further allocated to the users in $G2$ which reveals the phenomenon of filling the inattention.  }
  \label{case3}
\end{figure}

\begin{figure}[!t]
  \centering
  \subfigure[rational strategy]{
    \includegraphics[width=0.48\columnwidth]{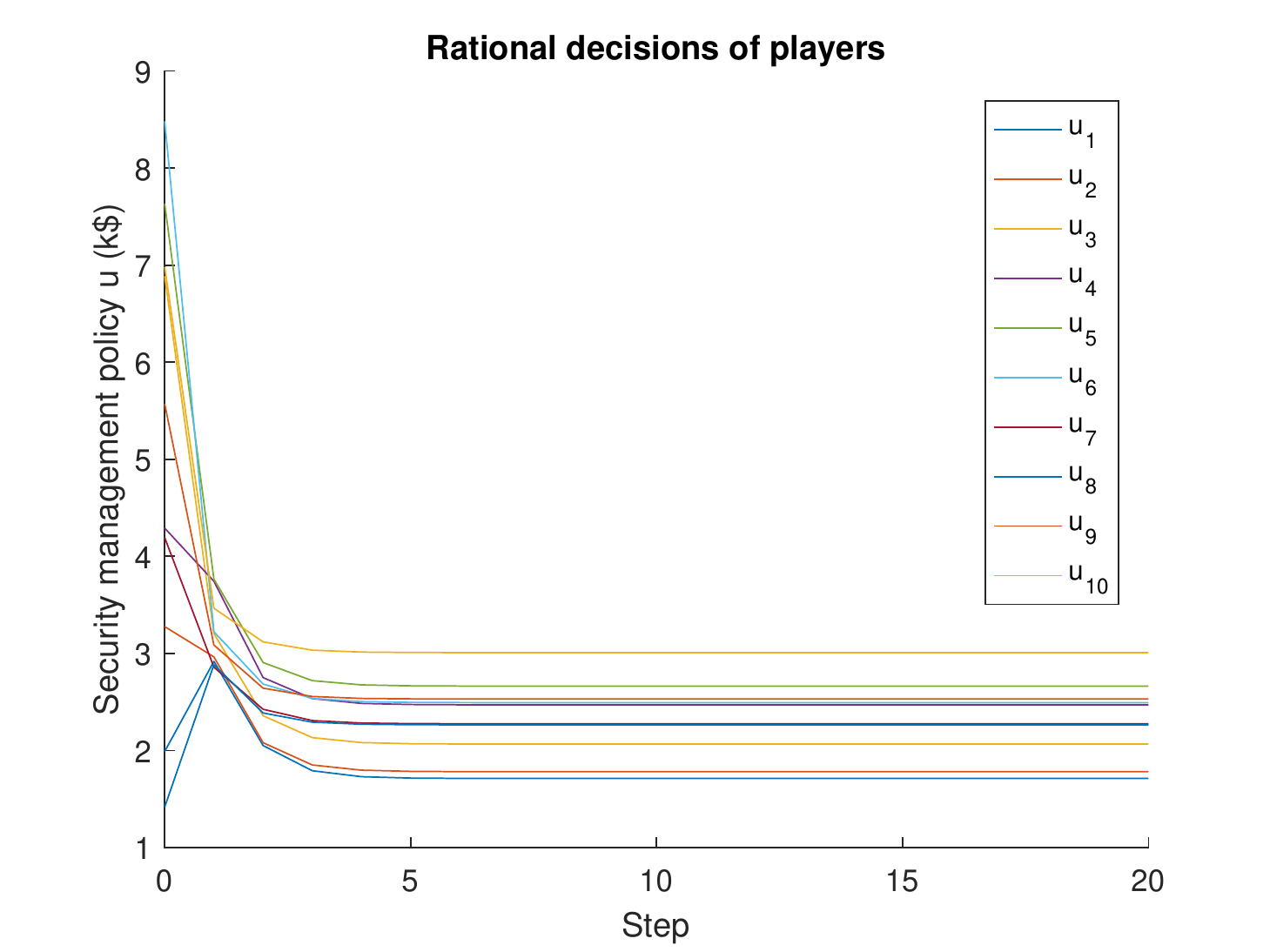}}
    	 \subfigure[cost under rational strategy]{
    \includegraphics[width=0.47\columnwidth]{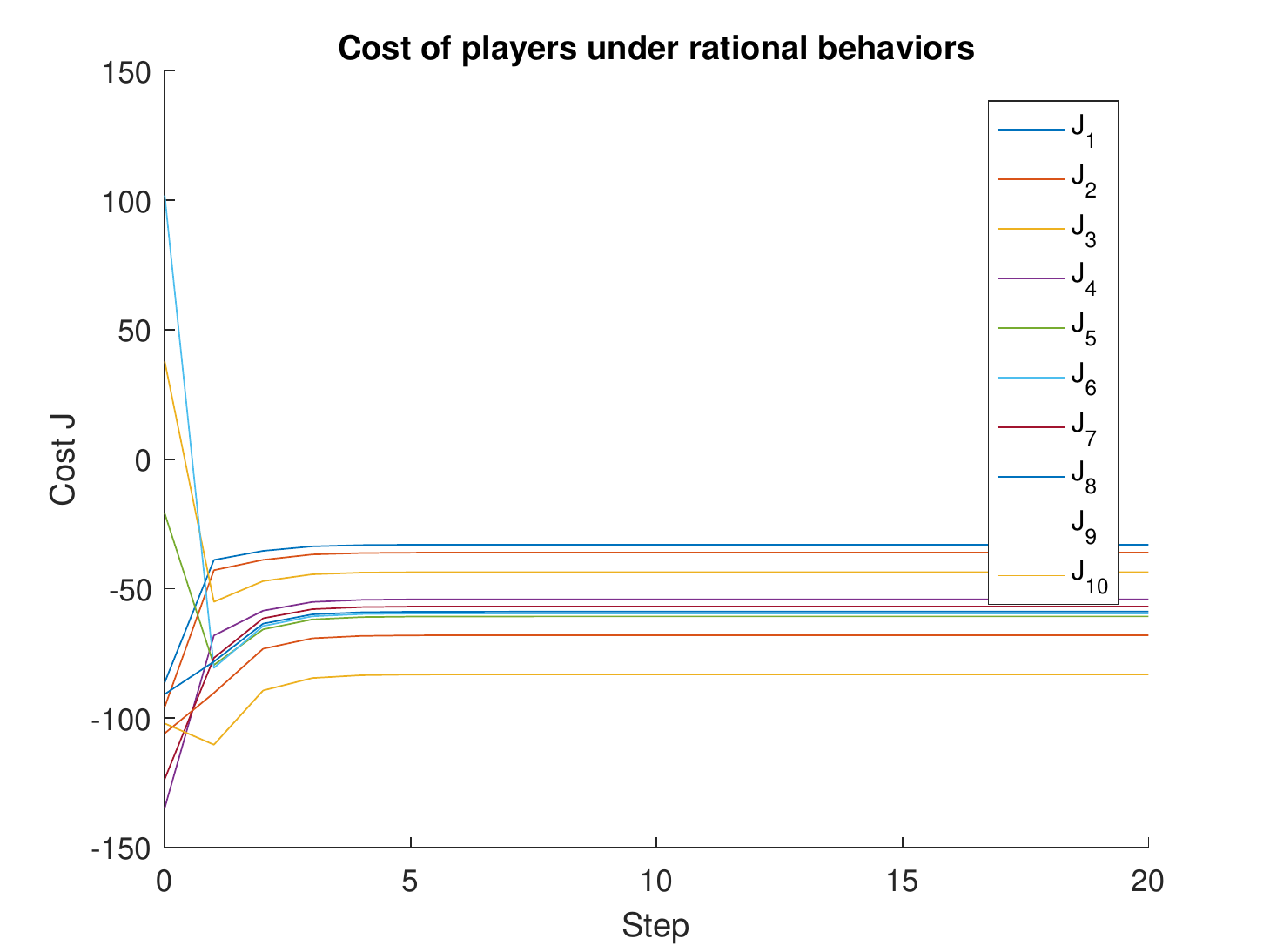}}        	 
    \subfigure[bounded rational strategy]{
    \includegraphics[width=0.47\columnwidth]{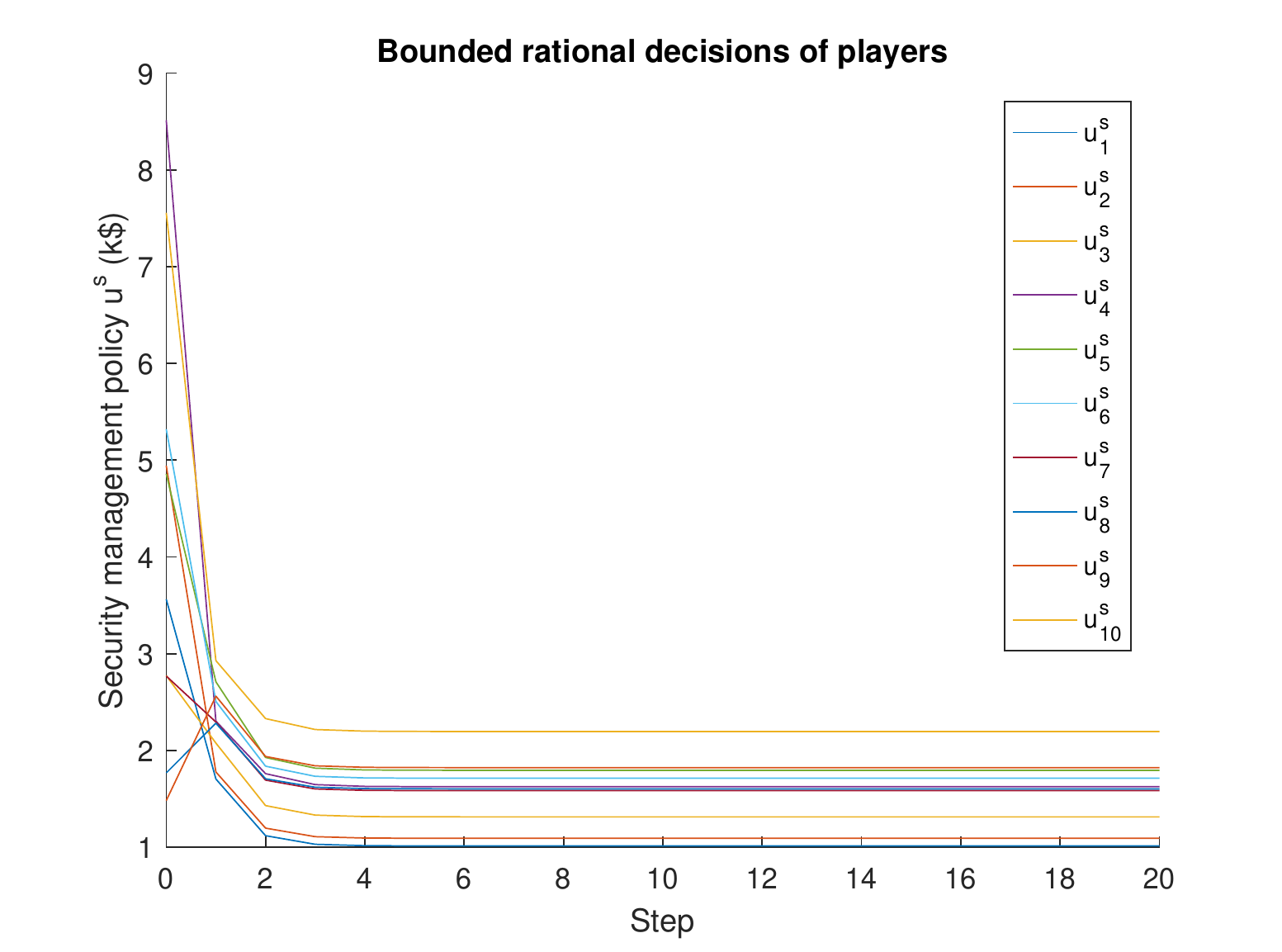}}
            	 \subfigure[RBP]{
    \includegraphics[width=0.47\columnwidth]{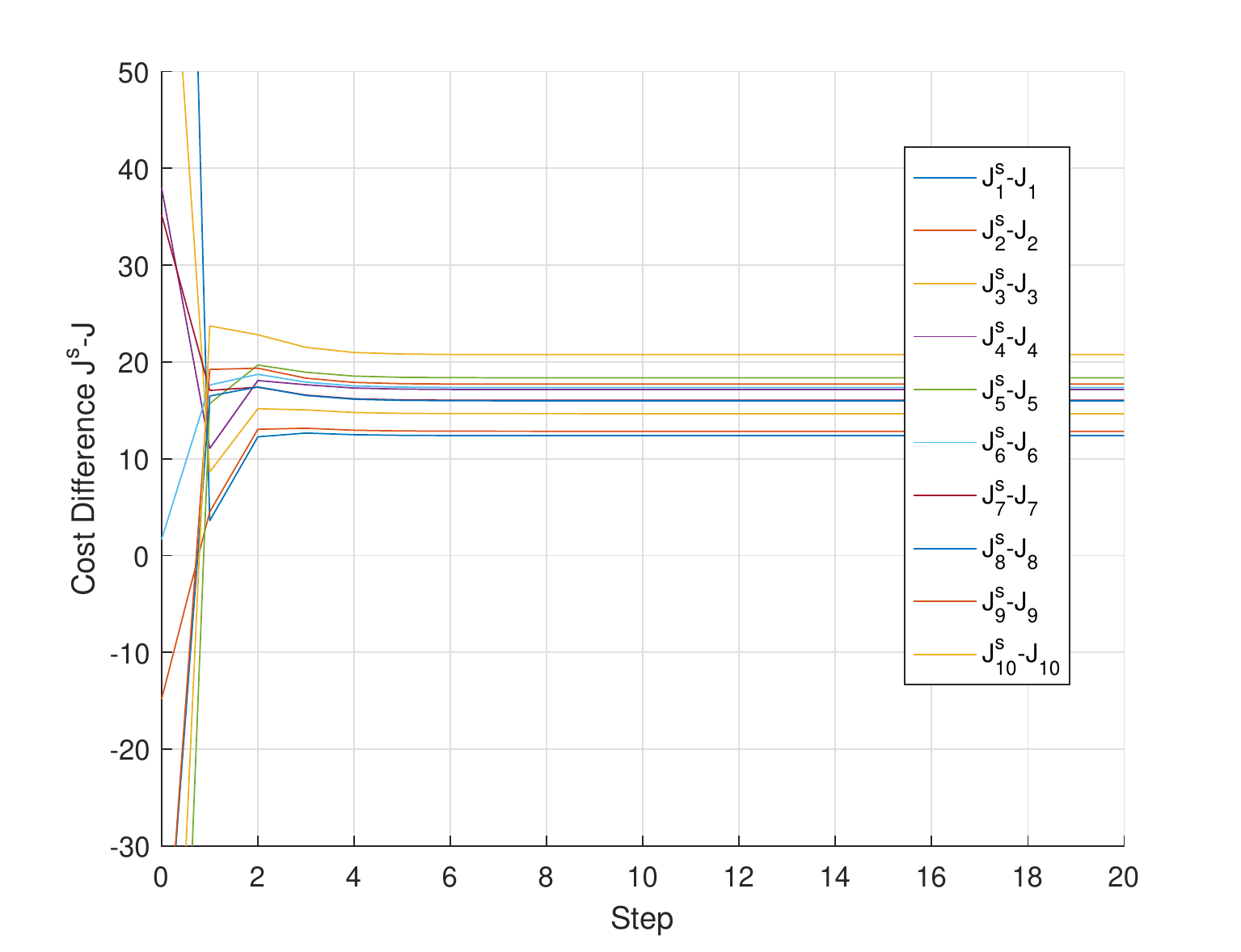}}
            	 \subfigure[cognitive network]{
    \includegraphics[width=0.59\columnwidth]{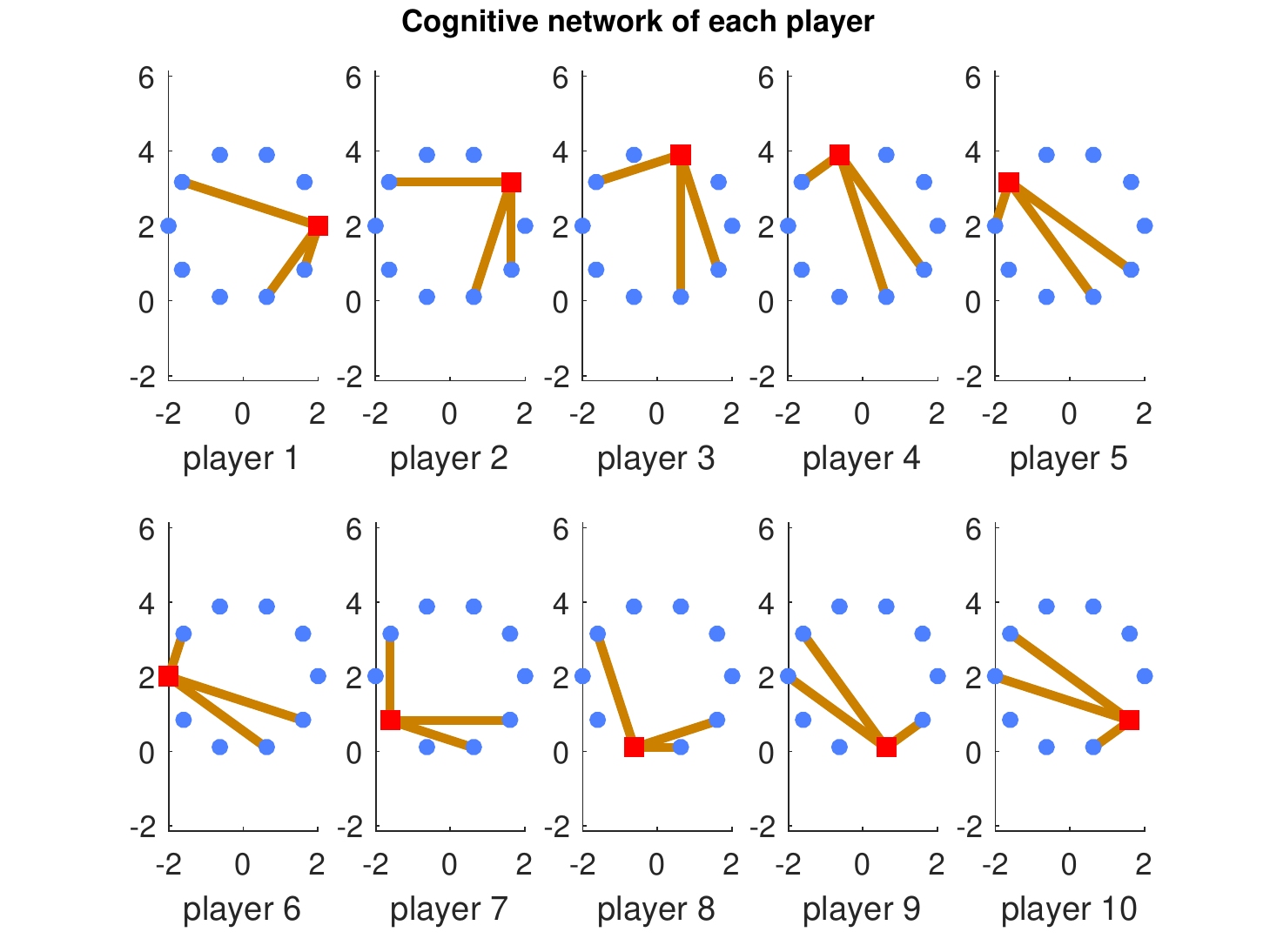}\label{case4_unrational_net}}
  \caption[]{In this heterogeneous case with 10 users, the formed cognitive network shown in (e) is sparse for each smart home. Note that the red rectangular in each subplot of (e) denotes the user that forms his cognitive network with the lines standing for links. Under the bounded rational model, the algorithm can successfully detect the critical agents (attraction of the mighty) in the IoT network which are 5th, 9th and 10th users in this case.}
  \label{case4}
\end{figure}

\subsection{Attraction of the Mighty}
The critical agents in the IoT-enabled smart community are those households whose security management policies will be taken into account by the other agents during their decision makings. Specifically, the nodes who often appear in the cognitive networks of other nodes can be regarded as critical agents. In the following case study, we aim to identify the critical agents in a smart community with $N=10$ households using Algorithm \ref{algorithm2}. To model the heterogeneity of smart homes, we choose $R_{jk}^i = 3\sin(i)+20 \ \mathrm{unit}/\mathrm{k}\$^2$ for $j=k=i$. Otherwise, $R_{jk}^i = 1 \ \mathrm{unit}/\mathrm{k}\$^2$, $\forall i$; $r_i=15+2i \ \mathrm{unit}/\mathrm{k}\$$ for $i\in\mathcal{N}$; and other parameters are the same as those in Section \ref{homo_case}. The results are shown in Fig. \ref{case4}. Specifically, Fig. \ref{case4_unrational_net} shows the established cognitive network of each player. For example, during the cognitive network formation, player 1 chooses to observe the strategies of players 5, 9, and 10 in the network, and player 5's cognitive network includes players 6, 9, and 10. Furthermore, agents 5, 9 and 10 present in all agents' cognitive networks, and hence they constitute a critical community in this smart home network. In addition, agent 6 also plays a critical role in agents 5, 9 and 10's cognitive networks. Therefore, the behavior of agents paying attention to a specific set of households can be described by the attraction of mighty. This case study demonstrates that Algorithm \ref{algorithm2} is able to identify the critical components in the smart communities.

\section{Conclusion}\label{conclusion}
In this paper, we have investigated the security management of users with limited attention over IoT networks through a two-layer framework. The proposed Gestalt Nash equilibrium (GNE) has successfully characterized the bilevel decision makings, including the security management policies and the cognitive network formations of users. Under the security interdependencies, users with a better cognition ability can reduce their cyber risks by making mature decisions. Furthermore, the designed proximal-based algorithm for the computation of GNE has revealed some phenomena that match well with the real-life observations, including the emergence of partisanship and attraction of the mighty. The future work would be extending the framework to incorporate hidden information of unperceived cyber risks of IoT users and design mechanisms to mitigate security loss. Another interesting research direction is to extend the current model to scenarios when a set of users are not fully strategic in minimizing their own risks and analyze the impact of this class of users' misbehavior on the network security risk.

%
%
%

\appendices

\section{Proof of Lemma \ref{lemma1}}\label{lemma1_apx}
\begin{proof}
Based on \eqref{real_cost}, we can compute the RBP of node $i$ as
\begin{equation*}
\begin{split}
&L_i(m^i,u_{-i}) =
 J^i(BR^i(u_{-i}^{c_i}),u_{-i}) - J^i(BR^i(u_{-i}),u_{-i}) \\
&=  \frac{1}{2}\sum_{\substack{j\neq i\\ j\in\mathcal{N}}} \sum_{\substack{k\neq i\\ k\in\mathcal{N}}}   \frac{m_j^i}{R_{ii}^i} R_{ji}^i  R_{ik}^i m_k^i {u_j} u_k  - \frac{1}{2} \sum_{\substack{k\neq i\\ k\in\mathcal{N}}} \sum_{\substack{j\neq i\\ j\in\mathcal{N}}} \frac{m_j^i}{R_{ii}^i} R_{ji}^i  R_{ik}^i {u_j} u_k  \\
&\ + \frac{1}{2} \sum_{\substack{j\neq i\\ j\in\mathcal{N}}}\sum_{\substack{k\neq i\\ k\in\mathcal{N}}}  \frac{1}{R_{ii}^i} R_{ji}^i  R_{ik}^i {u_j} u_k - \frac{1}{2} \sum_{\substack{k\neq i\\ k\in\mathcal{N}}} \sum_{\substack{j\neq i\\ j\in\mathcal{N}}} \frac{m_j^i}{R_{ii}^i} R_{ji}^i  R_{ik}^i {u_j} u_k.
\end{split}
\end{equation*}
Further, we can rewrite 
$\sum_{j\neq i,j\in\mathcal{N}}\sum_{k\neq i,k\in\mathcal{N}}  \frac{1}{R_{ii}^i} R_{ji}^i  R_{ik}^i {u_j} u_k 
= \sum_{j\neq i,j\in\mathcal{N}}\sum_{k\neq i,k\in\mathcal{N}} m_j^i \frac{1}{R_{ii}^i} R_{ji}^i  R_{ik}^i {u_j} u_k 
+  \sum_{j\neq i,j\in\mathcal{N}}(1-m_j^i) \sum_{k\neq i,k\in\mathcal{N}} (1-m_k^i)  \frac{1}{R_{ii}^i} R_{ji}^i  R_{ik}^i {u_j} u_k
 +  \sum_{j\neq i,j\in\mathcal{N}} (1-m_j^i) \sum_{k\neq i,k\in\mathcal{N}}m_k^i \frac{1}{R_{ii}^i} R_{ji}^i  R_{ik}^i {u_j} u_k.
$
Therefore, we obtain
\begin{align*}
L_i&(m^i,u_{-i}) = \frac{1}{2} \sum_{j\neq i,j\in\mathcal{N}}\sum_{k\neq i,k\in\mathcal{N}}  m_j^i \frac{1}{R_{ii}^i} R_{ji}^i  R_{ik}^i m_k^i {u_j} u_k \\
&+ \frac{1}{2} \sum_{j\neq i,j\in\mathcal{N}} (1-m_j^i) \sum_{k\neq i,k\in\mathcal{N}}m_k^i \frac{1}{R_{ii}^i} R_{ji}^i  R_{ik}^i {u_j} u_k \\
&-\frac{1}{2} \sum_{k\neq i,k\in\mathcal{N}} \sum_{j\neq i,j\in\mathcal{N}} m_j^i \frac{1}{R_{ii}^i} R_{ji}^i  R_{ik}^i {u_j} u_k \\
&+ \frac{1}{2} \sum_{j\neq i,j\in\mathcal{N}}(1-m_j^i) \sum_{k\neq i,k\in\mathcal{N}} (1-m_k^i)  \frac{1}{R_{ii}^i} R_{ji}^i  R_{ik}^i {u_j} u_k\\
&= \frac{1}{2} \sum_{j\neq i,j\in\mathcal{N}} \sum_{k\neq i,k\in\mathcal{N}} (1-m_j^i)(1-m_k^i)  \frac{1}{R_{ii}^i} R_{ji}^i  R_{ik}^i {u_j} u_k.
\end{align*}
\end{proof}

\section{Proof of Theorem \ref{convergence}}\label{thm1}
\begin{proof}
 The main idea of the proof follows \cite{frankel2015splitting} with several differences. Especially the imposed conditions for showing convergence in \cite{frankel2015splitting} are different. In addition, our algorithm contains projections and an auxiliary parameter $v_{k+1}$ during updates. 
First, based on Definition \ref{proximal_operator},
$
v_{k+1} = \mathrm{proj}_C\left(\mathrm{prox}_{\lambda_x f_2^i}(x_k- \lambda_x \nabla f_1^i(x_k))\right)
 = \arg\min_{x\in C}\ \langle \nabla f_1^i(x_k),x-x_k \rangle + \frac{1}{2\lambda_x}\Vert x-x_k \Vert^2+f_2^i(x).
$
Then, $\langle\nabla f_1^i(x_k),v_{k+1}-x_k \rangle +\frac{1}{2\lambda_x}\Vert v_{k+1}-x_k \Vert^2+f_2^i(v_{k+1})\leq f_2^i(x_k)$. Based on the Lipschitz continuous condition of $f_1^i$, we obtain
\begin{equation}\label{Q_bound}
\begin{split}
Q_i&(v_{k+1}) \leq f_2^i(v_{k+1})+f_1^i(x_k)+f_3^i(x_k)+ \langle\nabla f_1^i(x_k),v_{k+1}-x_k \rangle \\
&+ \frac{L_i}{2}\Vert v_{k+1}-x_k \Vert^2\\
&\leq  f_2^i(x_k)- \langle\nabla f_1^i(x_k),v_{k+1}-x_k \rangle - \frac{1}{2\lambda_x}\Vert v_{k+1}-x_k \Vert^2 \\
&+f_1^i(x_k)+f_3^i(x_k)+ \langle\nabla f_1^i(x_k),v_{k+1}-x_k \rangle + \frac{L_i}{2}\Vert v_{k+1}-x_k \Vert^2\\
& = Q(x_k) - \left( \frac{1}{2\lambda_x} - \frac{L_i}{2}\right) \Vert v_{k+1}-x_k \Vert^2 .
\end{split}
\end{equation}
When $Q_i(z_{k+1})\leq Q_i(v_{k+1})$, 
$
x_{k+1} = z_{k+1},\ Q_i(x_{k+1}) = Q_i(z_{k+1})\leq Q_i(v_{k+1}),
$
and when $Q_i(z_{k+1})> Q_i(v_{k+1})$, 
$
x_{k+1} = v_{k+1},\ Q_i(x_{k+1}) = Q_i(z_{k+1}).
$
Hence, 
\begin{equation}\label{Q_inequality}
Q_i(x_{k+1})\leq Q_i(v_{k+1})\leq Q_i(x_k).
\end{equation}
Based on \eqref{Q_bound} and \eqref{Q_inequality}, 
\begin{equation}
Q_i(v_{k+1})\leq Q_i(v_k)  - \left( \frac{1}{2\lambda_x} - \frac{L_i}{2}\right) \Vert v_{k+1}-x_k \Vert^2 .
\end{equation}
In addition,
\begin{equation}
\mathrm{dist(0,\partial Q_i(v_{k+1}))}\leq \left( \frac{1}{\lambda_x} + L_i\right) \Vert v_{k+1}-x_k \Vert.
\end{equation}
Furthermore, $\{x_k\}$ and $\{v_k\}$ have the same accumulation points. Let $\Psi$ be the set containing all the accumulation points of $\{x_k\}$. Note that $Q_i$ admits the same value $Q_i^*$ at all accumulation points in $\Psi$ due to the non-increasing $Q_i(v_k)$. Then, $Q_i(v_k)\geq Q_i^*$ and $Q_i(v_k)\rightarrow Q_i^*$. If there exists an $n$ such that $Q_i(v_{n})=Q_i^*$, the algorithm converges. If $Q_i(v_k)\geq Q_i^*$, $\forall k$, then there exists a $\tilde{k}_1$ such that $Q_i(v_k)< Q_i^*+\eta$ for $k>\tilde{k}_1$. Since $\mathrm{dist}(v_k,\Psi)\rightarrow 0$, there exists a $\tilde{k}_2$ such that $\mathrm{dist}(v_k,\Psi)<\epsilon$ for $k>\tilde{k}_2$. Thus, when $k>k_0=\max\{\tilde{k}_1,\tilde{k}_2\}$, $v_k\in \{v,\mathrm{dist}(v_k,\Psi)<\epsilon\}\cap \{Q_i^*<Q_i(v)<Q_i^*+\eta\}$. 
Based on the KL property in Definition \ref{KL_definition}, there exists a concave function $\phi$ such that 
\begin{equation}
\phi'(Q_i(v_k)-Q_i^*)\mathrm{dist}(0,\partial Q_i(v_k))\geq 1.
\end{equation}
Define $r_k := Q_i(v_k)-Q_i^*$, and we further assume that $r_k>0,\ \forall k$. Otherwise, the algorithm converges in finite steps by definition.
Then, $\forall k>k_0$,
\begin{equation}\label{d_1}
\begin{split}
1&\leq \phi'(Q_i(v_k)-Q_i^*)\mathrm{dist}(0,\partial Q_i(v_k))\\
&\leq \left(\phi'(r_k) \left( \frac{1}{\lambda_x} + L_i\right) \Vert v_{k}-x_{k-1} \Vert  \right)^2\\
&\leq (\phi'(r_k))^2 \left( \frac{1}{\lambda_x} + L_i\right)^2  \frac{Q_i(v_{k-1})-Q_i(v_k)}{\frac{1}{2\lambda_x}-\frac{L_i}{2}}\\
&=d_1  (\phi'(r_k))^2 (r_{k-1}-r_k),
\end{split}
\end{equation}
where $d_1 = 2\alpha(\frac{1}{\lambda_x}+L)^2/(1-2\alpha)$. Besides, $\phi$ admits the form $\phi(u) = \frac{\kappa}{\theta}u^\theta$. Then, \eqref{d_1} can be rewritten as 
\begin{equation}
1\leq d_1\kappa^2r_k^{2(\theta-1)}(r_{k-1}-r_k).
\end{equation}
Lemma \ref{KL_f} indicates that $0<\theta\leq \frac{1}{2}$, then, we have $-1\leq \theta-1<-\frac{1}{2}$ and $-1<2\theta-1<0$. When $r_{k-1}>r_k$, we obtain $r_{k-1}^{2(\theta-1)}<r_{k}^{2(\theta-1)}$ and $r_0^{2\theta-1}<r_1^{2\theta-1}<...<r_k^{2\theta-1}$. In addition, define $\zeta(u)=\frac{\kappa}{1-2\theta}u^{2\theta-1}$, and then $\zeta'(u) = -\kappa u^{2\theta-2}$. When $r_{k}^{2(\theta-1)}\leq 2r_{k-1}^{2(\theta-1)}$, then $\forall k>k_0$,
\begin{align*}
\zeta(r_k) - \zeta(r_{k-1}) &= \kappa\int_{r_k}^{r_k}u^{2(\theta-1)}du\geq \kappa r_{k-1}^{2(\theta-1)} (r_{k-1}-r_k)\\
&\geq \frac{1}{2}\kappa r_{k-1}^{2(\theta-1)} (r_{k-1}-r_k) \geq \frac{1}{2\kappa d_1}.
\end{align*}
When $r_{k}^{2(\theta-1)}> 2r_{k-1}^{2(\theta-1)}$, then $r_{k}^{2\theta-1}>2^{\frac{2\theta-1}{2(\theta-1)}} r_{k-1}^{2\theta-1}$, and 
\begin{align*}
&\zeta(r_k) - \zeta(r_{k-1}) =\frac{\kappa}{1-2\theta}(r_{k}^{2\theta-1} - r_{k-1}^{2\theta-1})\\
&> \frac{\kappa}{1-2\theta} (2^{\frac{2\theta-1}{2(\theta-1)}}-1)r_{k-1}^{2\theta-1}> \frac{\kappa}{1-2\theta} (2^{\frac{2\theta-1}{2(\theta-1)}}-1)r_{0}^{2\theta-1}.
\end{align*}
Let $\sigma = \frac{\kappa}{1-2\theta} (2^{\frac{2\theta-1}{2(\theta-1)}}-1)$ and $d_2=\min \{\frac{1}{2\kappa d_1},\sigma r_{0}^{2\theta-1} \}$, then $\forall k>k_0$, $\zeta(r_k)-\zeta(r_{k-1})\geq d_2$, and $\zeta(r_k)\geq \zeta(r_k)-\zeta(r_{k_0})\geq \sum_{t=k_0+1}^k \zeta(r_t)-\zeta(r_{t-1})\geq (k-k_0)d_2$. Hence, $r_k^{2\theta-1}\geq \frac{d_2}{\kappa}(k-k_0)(1-2\theta)$, leading to
$
r_k \leq \frac{\kappa}{d_2(k-k_0)(1-2\theta)}^{\frac{1}{1-2\theta}}.
$
Therefore, we obtain
$
Q_i(x_k)-Q_i^*\leq Q_i(v_k)-Q_i^*=r_k =  \left( \frac{\kappa}{d_2(k-k_0)(1-2\theta)}\right)^{\frac{1}{1-2\theta}}.
$
\end{proof}

\bibliographystyle{IEEEtran}
\bibliography{IEEEabrv,references}

\end{document}